\title{Algebras of distributions suitable for \\
       phase-space quantum mechanics. I}
\author{José M. Gracia-Bondía and Joseph C. Várilly \\[12pt]
{\small
Escuela de Matemática, Universidad de Costa Rica,
11501 San José, Costa Rica}}
\date{\small
J. Math.\ Phys. \textbf{29} (1988), 869--879}
\newcommand{\al}{\alpha}            
\newcommand{\bt}{\beta}             
\newcommand{\Dl}{\Delta}            
\newcommand{\dl}{\delta}            
\newcommand{\eps}{\varepsilon}      
\newcommand{\ga}{\gamma}            
\newcommand{\bC}{\mathbb{C}}
\newcommand{\bN}{\mathbb{N}}
\newcommand{\bR}{\mathbb{R}}
\newcommand{\one}{\mathbb{1}}         
\newcommand{\sD}{\mathcal{D}}
\newcommand{\sE}{\mathcal{E}}       
\newcommand{\sF}{\mathcal{F}}       
\newcommand{\sG}{\mathcal{G}}
\newcommand{\sM}{\mathcal{M}}       
\newcommand{\sO}{\mathcal{O}}
\newcommand{\sR}{\mathcal{R}}
\newcommand{\sS}{\mathcal{S}}       
\newcommand{\sV}{\mathcal{V}}
\bmdefine{\sss}{s}                  
\newcommand{\del}{\partial}         
\renewcommand{\geq}{\geqslant}      
\renewcommand{\leq}{\leqslant}      
\newcommand{\ovl}{\overline}        
\newcommand{\ox}{\otimes}           
\newcommand{\stroke}{\mathbin|}
\newcommand{\x}{\times}             
\renewcommand{\:}{\colon}           
\newcommand{\4}{\diamond}           
\newcommand{\abar}{\bar a}
\newcommand{\delhat}{\hat\partial}
\newcommand{\Fbar}{\widetilde{F}}
\newcommand{\sOexp}{\sO_\mathrm{exp}}
\newcommand{\xyx}{\times\cdots\times}
\newcommand{\half}{{\mathchoice{\thalf}{\thalf}{\shalf}{\shalf}}}
\newcommand{\shalf}{{\scriptstyle\frac{1}{2}}} 
\newcommand{\thalf}{\tfrac{1}{2}}   
\newcommand{\hideqed}{\renewcommand{\qed}{}} 
\newcommand{\set}[1]{\{\,#1\,\}}     
\newcommand{\word}[1]{\quad\text{#1}\quad} 
\newcommand{\duo}[2]{\langle#1,#2\rangle} 
\newcommand{\pd}[2]{\frac{\partial#1}{\partial#2}} 
\newcommand{\scal}[2]{(#1\stroke#2)} 
\newcommand{\twobytwo}[4]{\begin{pmatrix} 
         #1 & #2 \\ #3 & #4 \end{pmatrix}}
\theoremstyle{plain}
\newtheorem{thm}{Theorem}             
\newtheorem{lema}{Lemma}              
\newtheorem{corl}{Corollary}          
\newtheorem{prop}{Proposition}        
\theoremstyle{definition}
\newtheorem{defn}{Definition}         
\theoremstyle{remark}
\newtheorem{remk}{Remark}             
\renewcommand{\section}{\@startsection{section}{1}{\z@}%
							 {-3.25ex \@plus -1ex \@minus -.2ex}%
							 {1.5ex \@plus.2ex}%
							 {\normalfont\large\bfseries}}
\renewcommand{\subsection}{\@startsection{subsection}{2}{\z@}%
							 {-3.25ex \@plus -1ex \@minus -.2ex}%
							 {1.5ex \@plus .2ex}%
							 {\normalfont\normalsize\bfseries}}
\begin{document}

\maketitle

\begin{abstract}
The twisted product of functions on $\bR^{2N}$ is extended to a
$*$-algebra of tempered distributions which contains the rapidly
decreasing smooth functions, the distributions of compact support, and
all polynomials, and moreover is invariant under the Fourier
transformation. The regularity properties of the twisted product are
investigated. A matrix presentation of the twisted product is given,
with respect to an appropriate orthonormal basis, which is used to
construct a family of Banach algebras under this product.
\end{abstract}

\section{Introduction} 
\label{sec:intro}

This is the first of two papers whose aim is to give a rigorous
formulation to the Weyl--Wigner--Groenewold--Moyal or phase-space
approach to quantum mechanics of spinless, nonrelativistic particles.
(In a future article, we will show how spin may be incorporated also
in this formalism.) In recent years, this approach has received
increasing attention \cite{GrossmannLS68, BayenFFLS78, AmietH81,
HilleryOCSW84, HAtom}. However, much remains to be done to unify its
different strands. On the one hand, much useful quantum physics can be
done using the distribution functions in the sense of
Wigner~\cite{HilleryOCSW84}. On the other hand, most mathematical
attention has centred on the Weyl operator calculus \cite{Pool66,
Anderson72, Cressman76, Hormander79}. As Groenewold
\cite{Groenewold46} and Moyal~\cite{Moyal49} have shown, one can work
with functions on the classical phase space only, in a self-contained
way, using the ``twisted product'' concept. Similarly, the
practitioners of ``deformation theory'' \cite{BayenFFLS78} have given
a promising axiomatic basis for quantum mechanics, but with
mathematical tools rather different from the usual functional-analytic
methods of quantum theory.

We attempt here to establish a mathematically rigorous and physically
manageable formulation for quantum mechanics in phase space. To obtain
the right mathematical context, we must, for example, specify those
pairs of functions whose twisted product may be formed; and a suitable
function space should include as many observables of physical interest
as possible. To include the basic observables of position and
momentum, we must leave aside the algebra of bounded operators on a
Hilbert space: in rigorizing the phase-space approach, one soon finds
that it is useful to work with locally convex topological vector
spaces which are not necessarily Banach spaces.

The paper is organized as follows. In
Sec.~\ref{sec:Schwartz-algebras}, we review the properties of the
twisted product and convolution in the Schwartz space $\sS(\bR^{2N})$.
In Sec.~\ref{sec:Moyal-algebra}, we dualize these notions to the case
where one or both factors are tempered distributions, and identify the
Moyal algebra $\sM$, that is, the largest $*$-algebra of distributions
where these operations are defined and associative. We show that $\sM$
is invariant under Fourier transformations. In
Sec.~\ref{sec:Moyal-regularity}, we consider the regularity properties
of the twisted product and convolution, and show that distributions of
compact support belong to the Moyal algebra. In
Sec.~\ref{sec:matrix-basis}, we construct an orthonormal basis in
$\sS(\bR^{2N})$; using this basis, we show that the twisted product
may be presented as a matrix product of double sequences. As a
consequence, we construct a net of Sobolev-like spaces of tempered
distributions, some of which are Banach algebras with respect to the
twisted product; these permit a more detailed examination of the Moyal
algebra~$\sM$.

\section{The algebras ($\sS_2,\x$) and ($\sS_2,\4$)} 
\label{sec:Schwartz-algebras}

Throughout this paper, we work with certain spaces of functions and
distributions over $\bR^{2N}$, regarded as the phase space
$T^*(\bR^N)$. For $u,v \in \bR^{2N}$, we write $u'v$ and $u'Jv$ for
the ordinary and symplectic scalar products of~$u$ and~$v$. Choosing
and fixing an orthonormal symplectic basis for $\bR^{2N}$, we write
$$
u = (u_1,u_2,\dots,u_{2N}) = (q_1,\dots,q_N,p_1,\dots,p_N)
$$
and
$$
v = (v_1,v_2,\dots,v_{2N}) 
= (\tilde q_1,\dots,\tilde q_N,\tilde p_1,\dots,\tilde p_N),
$$
where explicitly
$$
u'v := \sum_{i=1}^N (q_i\tilde q_i + p_i\tilde p_i), \qquad
u'Jv := \sum_{i=1}^N (q_i\tilde p_i - p_i\tilde q_i),
$$
where $J$ is the matrix $\twobytwo{0}{1_N}{-1_N}{0}$ in the chosen
basis. Note that $v'Ju = -u'Jv$ and $u'Ju = 0$.

We define $\sS_2 := \sS(\bR^{2N})$ as the Schwartz space of smooth
rapidly decreasing functions on $\bR^{2N}$. If $f \in \sS_2$,
$s \in \bR^{2N}$ and $1 \leq j \leq 2N$, we define
$f^*(u) := \ovl{f(u)}$, $\check f(u) := f(-u)$, and also:
\begin{alignat*}{2}
(\mu_j f)(u) &:= u_j f(u), &\qquad     \del_j f &:= \pd{f}{u_j},
\\
(\tau_sf)(u) &:= f(u - s), &\qquad (\eps_sf)(u) &:= e^{is'Ju}f(u), 
\end{alignat*}
and
$$
\delhat_j f := \begin{cases}
 \del_{j+N} f & \text{if $1 \leq j \leq N$,} \\
-\del_{j-N} f & \text{if $N < j \leq 2N$.}   \end{cases}
$$

We make three normalizations which are a little unconventional. First,
for integrals over $\bR^{2N}$ we use the Haar measure 
$dx := (2\pi)^{-N}\,d^{2N}x$ where $d^{2N}x$ is Lebesgue measure.
(This gets rid of powers of~$2\pi$ in Fourier
transforms~\cite{Rudin73}). In particular, $\int e^{-x^2/2} \,dx = 1$.
Secondly, we use the \textit{bilinear} form
$$
\duo{f}{g} := \int f(x) g(x) \,dx
$$
and the \textit{sesquilinear} form
$$
\scal{f}{g} := 2^{-N} \duo{f^*}{g} = 2^{-N} \int \ovl{f(x)} g(x) \,dx
$$
whenever the integrals converge. For $f \in L^2(\bR^{2N})$, we will
use the norm $\|f\| := \scal{f}{f}^{1/2}$. Thirdly, for Planck's
constant we take $\hbar = 2$ (rather than the usual $\hbar = 1$).

We define an ordinary Fourier transform $\sF$ and two symplectic
Fourier transforms $F$ and~$\Fbar$ \cite{MiracleS66} by
$$
(\sF f)(u) := \int f(t) e^{-it'u} \,dt,  \quad
(Ff)(u) := \int f(t) e^{-it'Ju} \,dt,  \quad
(\Fbar f)(u) := \int f(t) e^{it'Ju} \,dt.
$$
The transforms $\sF$, $F$ and $\Fbar$ are commuting isomorphisms (of
Fréchet spaces) of $\sS_2$ onto $\sS_2$, and satisfy the following
formulas:
\begin{alignat*}{2}
Ff(u) &= \sF f(Ju), &\quad \Fbar f(u) &= \sF f(-Ju),
\\
F^2 &= \Fbar^2 = \text{Id}, 
&\quad \Fbar f &= (Ff)^\vee = F(\check f), \quad (Ff)^* = \Fbar(f^*),
\\
F(\tau_s f) &= \eps_{-s} Ff, &\quad F(\eps_s f) &= \tau_{-s} Ff,
\\
F(\delhat_j f) &= -i\mu_j Ff, &\quad F(\mu_j f) &= i\delhat_j Ff,
\\
\duo{Ff,g} &= \duo{f,\Fbar g}, &\quad \scal{Ff}{g} &= \scal{f}{Fg}.
\end{alignat*}

\begin{defn} 
\label{df:twisted-product}
If $f,g \in \sS_2$, the \textit{twisted product} $f \x g$ is defined
by
\begin{align}
(f \x g)(u)
&:= \iint f(v) g(w) \exp\bigl( i(u'Jv + v'Jw + w'Ju) \bigr) \,dv \,dw
\nonumber \\
&= \iint f(u + s) g(u + t) \,e^{is'Jt} \,ds \,dt. 
\label{eq:twisted-product} 
\end{align}
The \textit{twisted convolution} $f \4 g$ is defined by
\begin{equation}
(f \4 g)(u) := \int f(u - t) g(t) \,e^{-iu'Jt} \,dt.
\label{eq:twisted-colvoln} 
\end{equation}
\end{defn}

\begin{remk} 
It was von Neumann \cite{Neumann31} who introduced the twisted
convolution (although he gave it no name) in order to establish the
uniqueness of the Schroedinger representation. It has been used by
Kastler and others~\cite{MiracleS66,Kastler65,LoupiasMS66} to study
the canonical commutation relations.
\end{remk}

\begin{remk} 
The twisted product is nothing but the Weyl functional
calculus~\cite{Anderson72} seen from another point of view. As
in~\cite{GrossmannLS68} and elsewhere, one may regard it as
$f \x g = \sF^{-1}(\sF f \4 \sF g)$; but perhaps a more natural
motivation is the following. The pointwise product $f(u)g(u)$ is not
suitable for quantum mechanics since the uncertainty principle forbids
localization at a point in phase space. Following
S{\l}awianowski~\cite{Slawianowski77}, we seek to replace it by some
other product which is translation and symplectic equivariant,
associative, and nonlocal. In~\cite{AmietH81} it is shown that the
only integral kernels satisfying translation and symplectic
equivariance and associativity are $a\dl(s)\dl(t)$ -- for the
pointwise product -- and $be^{ics'Jt}$, where $a,b,c$ are constants
which we may set equal to~$1$.
\end{remk}

\begin{prop} 
\label{pr:Leibniz-rules}
If $f,g \in \sS_2$, then $f \x g \in \sS_2$, the map 
$(f,g) \mapsto f \x g$ is a continuous bilinear operation on $\sS_2$,
and
\begin{align}
\del_j(f \x g) &= \del_j f \x g + f \x \del_j g;
\nonumber \\ 
\mu_j(f \x g) &= f \x \mu_j g + i\delhat_j f \x g 
= \mu_j f \x g - if \x \delhat_j g.
\label{eq:Leibniz-rules} 
\end{align}
\end{prop}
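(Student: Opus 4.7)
My plan is to first establish the two Leibniz-type identities, since these are the central tool from which both membership in $\sS_2$ and continuity will follow. The derivative rule $\del_j(f \x g) = \del_j f \x g + f \x \del_j g$ is immediate upon differentiating under the integral in the second form of \eqref{eq:twisted-product}: for Schwartz $f,g$ the partial derivatives of the integrand are dominated, uniformly for $u$ in any bounded set, by an integrable function of $(s,t)$, so the interchange is legitimate. For the $\mu_j$ identities the key trick is to convert a factor of $u_j$ into operators on $f$ and $g$ by writing either $u_j = (u_j + t_j) - t_j$ or $u_j = (u_j + s_j) - s_j$. Under the first decomposition, the summand $(u_j + t_j)$ becomes $f \x \mu_j g$ directly, while the stray $-t_j$ is absorbed via the identity $t_j\, e^{is'Jt} = i\,\delhat_j e^{is'Jt}$, where $\delhat_j$ acts in the $s$ variable; an integration by parts in $s$ then yields the correction $i(\delhat_j f \x g)$. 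The symmetric computation based on $s_j\, e^{is'Jt} = -i\,\delhat_j e^{is'Jt}$ (now in $t$) and integration by parts in $t$ gives the other form.

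Next I would establish the elementary bound $|(f \x g)(u)| \leq \|f\|_{L^1}\|g\|_{L^1}$, obtained simply by taking absolute values inside \eqref{eq:twisted-product} since the phase factor has modulus one; and recall that $\|h\|_{L^1}$ is dominated by a single fixed Schwartz seminorm of~$h$. To upgrade this to all Schwartz seminorms of $f \x g$, I would iterate the two Leibniz rules: an induction on $|\al| + |\bt|$ shows that $\mu^\al \del^\bt (f \x g)$ is a finite linear combination of twisted products of the form $\delhat^{\ga'}\del^{\ga''} f \x \mu^{\dl'}\del^{\dl''} g$ (and the symmetric family obtained by switching $f$ and $g$), whose multi-indices are controlled by~$(\al,\bt)$. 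Applying the base estimate to each summand bounds every Schwartz seminorm of $f \x g$ by a finite product of Schwartz seminorms of $f$ and~$g$, which simultaneously establishes $f \x g \in \sS_2$ and joint continuity of $(f,g) \mapsto f \x g$.

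The main obstacle is the combinatorial bookkeeping in this final induction: one must verify that each application of $\mu_j$ or $\del_j$ produces only finitely many terms with controlled multi-index weights, so that the expansion of $\mu^\al \del^\bt (f \x g)$ remains a finite sum. All other ingredients reduce to routine manipulations of absolutely convergent integrals.
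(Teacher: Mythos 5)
Your proposal is correct and follows essentially the same route as the paper: differentiate under the integral to obtain the $\del_j$ Leibniz rule, use the decomposition of $u_j$ together with an integration by parts against the phase to get the $\mu_j$ identities, establish the base estimate $\|f \x g\|_\infty \leq \|f\|_1 \|g\|_1$, and then iterate the Leibniz rules to bound all Schwartz seminorms of $f \x g$ by seminorms of $f$ and $g$, yielding both membership in $\sS_2$ and joint continuity. The paper simply records the iterated Leibniz expansion in closed form (a double binomial sum in which only $\mu^{\al-\bt}\del^{\ga-\eps}$ acts on $f$ and $\delhat^\bt\del^\eps$ acts on $g$), which makes the ``combinatorial bookkeeping'' you flag as the main obstacle entirely explicit and tames the multi-index growth at a glance.
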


\begin{proof}
The Leibniz formula follows by differentiating
\eqref{eq:twisted-product} under the integral sign, and
\eqref{eq:Leibniz-rules} is a straightforward calculation. By
induction on these formulas, $f \x g$ lies in $\sS_2$. If
$\al = (\al_1,\dots,\al_{2N}) \in \bN^{2N}$, we write
$\del^\al = \del_1^{\al_1} \cdots \del_{2N}^{\al_{2N}}$ and similarly
define $\mu^\al$, $\delhat^\al$. Then
$$
\mu^\al\,\del^\ga(f \x g) = \sum_{\bt\leq\al} \sum_{\eps\leq\ga}
(-i)^{|\bt|} \binom{\al}{\bt} \binom{\ga}{\eps} 
\mu^{\al-\bt} \,\del^{\ga-\eps} f \x \delhat^\bt \,\del^\eps g.
$$
{}From \eqref{eq:twisted-product} we get
$\|f \x g\|_\infty \leq \|f\|_1 \,\|g\|_1$. Since the topology
of~$\sS_2$ is given by the seminorms
$p_{\al\ga}(f) := \|\mu^\al\,\del^\ga f\|_\infty$ or by
$q_{\al\ga}(f) := \|\mu^\al\,\del^\ga f\|_1$, the estimates
$$
p_{\al\ga}(f \x g) \leq \sum_{\bt\leq\al} \sum_{\eps\leq\ga} 
\binom{\al}{\bt} \binom{\ga}{\eps} q_{\al-\bt,\ga-\eps}(f)
q_{0,\eta+\eps}(g),
$$
with $\eta_j = \bt_{j\pm N}$ for all~$j$, show that
$(f,g) \mapsto f \x g$ is jointly continuous for the topology 
of~$\sS_2$.
\end{proof}

The various Fourier transforms intertwine $\x$~and~$\4$, just as with
``ordinary'' products and convolutions. In fact, even more is true: by
applying a symplectic Fourier transform to one side only, we can
interchange the operations $\x$~and~$\4$. This allows us to work with
the operation most convenient to any particular calculation,
transferring the result to the other one by Fourier-invariance
of~$\sS_2$. Explicitly, we find:
\begin{equation}
f \x g = Ff \4 g = f \4 \Fbar g, \qquad
f \4 g = Ff \x g = f \x \Fbar g,
\label{eq:Fourier-switch} 
\end{equation}
since, for example,
\begin{align*}
(f \x g)(u) &= \iint f(v) g(w) \,e^{-iv'J(u-w)} e^{iw'Ju} \,dv \,dw
\\
&= \int Ff(u - w) g(w) \,e^{-iu'Jw} \,dw = (Ff \4 g)(u). 
\end{align*}
We also find 
\begin{equation}
\sF(f \x g) = \sF f \4 \sF g,  \qquad  \sF(f \4 g) = \sF f \x \sF g,
\label{eq:Fourier-transfer} 
\end{equation}
and exactly analogous formulas with $\sF$ replaced by $F$ or~$\Fbar$.
Also,
\begin{equation}
(f \x g) \x h = f \x (g \x h), \qquad (f \4 g) \4 h = f \4 (g \4 h),
\label{eq:twisted-assoc} 
\end{equation}
since 
\begin{align*}
((f \4 g) \4 h)(u) 
&= \iint f(u - t - s) g(s) h(t) \,e^{-i(u'Jt+(u-t)'Js)} \,ds \,dt
\\
&= \iint f(u - v) g(v - t) h(t) \,e^{-i(u'Jv-t'Jv)} \,dt \,dv
\\
&= (f \4 (g \4 h))(u)
\end{align*}
and applying \eqref{eq:Fourier-transfer} yields the associativity
of~$\x$. Next,
$$
(f \x g)^* = g^* \x f^*, \qquad (f \4 g)^* = g^* \4 f^*,
$$
since, for instance, 
\begin{align*}
(f \x g)^*(u) 
&= \iint f^*(u + s) g^*(u + t) \,e^{-is'Jt} \,ds \,dt
\\
&= \iint g^*(u + t) f^*(u + s) \,e^{it'Js} \,dt \,ds
\\
&= (g^* \x f^*)(u).
\end{align*}

A fact of fundamental importance is the following identity.

\begin{prop} 
\label{pr:tracial-property}
If $f,g \in \sS_2$, then
\begin{equation}
\int (f \x g)(u) \,du = \int (g \x f)(u) \,du = \int f(u) g(u) \,du.
\label{eq:tracial-property} 
\end{equation}
\end{prop}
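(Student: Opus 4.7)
The strategy is to reduce the statement to a Parseval-type identity by means of the symplectic Fourier transform~$F$. The key observation is that setting $u = 0$ in the definition $(Fh)(u) = \int h(t)\,e^{-it'Ju}\,dt$ gives
$$
\int h(u)\,du = (Fh)(0) \word{for every} h \in \sS_2.
$$
Applying this to $h = f \x g$, combined with the $F$-analog of~\eqref{eq:Fourier-transfer}, namely $F(f \x g) = Ff \4 Fg$, reduces the problem to the evaluation of the twisted convolution $Ff \4 Fg$ at the origin.

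From the definition~\eqref{eq:twisted-colvoln}, together with the identity $\Fbar f = (Ff)^\vee$ listed in the excerpt,
$$
(Ff \4 Fg)(0) = \int (Ff)(-t)\,(Fg)(t)\,dt
= \int (\Fbar f)(t)\,(Fg)(t)\,dt = \duo{\Fbar f}{Fg}.
$$
The Parseval identity $\duo{Fh}{k} = \duo{h}{\Fbar k}$, combined with the symmetry of the bilinear form $\duo{\.}{\.}$ and with $\Fbar^2 = \text{Id}$, then yields
$$
\duo{\Fbar f}{Fg} = \duo{Fg}{\Fbar f} = \duo{g}{\Fbar \Fbar f}
= \duo{g}{f} = \duo{f}{g} = \int f(u) g(u)\,du,
$$
which is the second equality in~\eqref{eq:tracial-property}. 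The first equality follows by applying the same reasoning to $g \x f$, or equivalently by using bilinear symmetry directly in the last display.

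The main subtlety is simply the bookkeeping of signs and orientations in the symplectic Fourier calculus; no analytic difficulty arises, since for Schwartz factors every integral above converges absolutely, so Fubini may be applied freely and the exchange $F(f \x g) = Ff \4 Fg$ is valid in the strong sense.
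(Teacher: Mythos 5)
Your proof is correct and takes essentially the same approach as the paper: evaluate the (symplectic) Fourier transform of $f \x g$ at the origin, use the intertwining $F(f\x g) = Ff \4 Fg$, and identify the resulting integral as $\duo{f}{g}$. The paper uses the ordinary transform $\sF$ and closes via the convolution theorem $(\sF f * \sF g)(0) = \sF(fg)(0)$, whereas you close via Parseval and $\Fbar^2 = \mathrm{Id}$; this is a cosmetic variation on the same argument.
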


\begin{proof} 
\begin{align*}
\int (f \x g)(u) \,du
&= \sF(f \x g)(0) = (\sF f \4 \sF g)(0) = \int \sF f(-t) \sF g(t) \,dt
\\
&= (\sF f * \sF g)(0) = \sF(fg)(0) = \int f(u) g(u) \,du
\end{align*} 
where $*$ denotes ordinary convolution.
\end{proof}

The cyclicity inherent in the tracial
identity~\eqref{eq:tracial-property} is what allows us to push through
the extension via duality. We note an important consequence
of~\eqref{eq:tracial-property}.

\begin{prop} 
\label{pr:product-transfer}
If $f,g,h \in \sS_2$, then
\begin{align}
\duo{f \x g}{h} &= \duo{f}{g \x h} = \duo{g}{h \x f};
\label{eq:product-transfer} 
\\
\duo{f \4 g}{h} &= \duo{f}{\check g \4 h} = \duo{g}{h \4 \check f};
\label{eq:convol-transfer} 
\\
\scal{h}{f \x g} &= \scal{f^* \x h}{g} = \scal{h \x g^*}{f}.
\nonumber
\end{align}
\end{prop}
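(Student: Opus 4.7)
The plan is to obtain all three rows of \eqref{eq:product-transfer} from the tracial identity \eqref{eq:tracial-property}, associativity \eqref{eq:twisted-assoc}, the Fourier switch \eqref{eq:Fourier-switch}, and the involutions $(f\x g)^* = g^* \x f^*$ together with $F^2 = \Fbar^2 = \text{Id}$. I will handle the twisted product row first, then transfer to convolution using \eqref{eq:Fourier-switch}, and finally deduce the sesquilinear row from the bilinear one by conjugation.

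For the twisted product row, I read \eqref{eq:tracial-property} backwards as a tool for inserting a twisted product under an integral: $\duo{f\x g}{h} = \int (f\x g)(u)\,h(u)\,du = \int ((f\x g)\x h)(u)\,du$. Associativity regroups this as $\int (f\x (g\x h))(u)\,du$, and reading \eqref{eq:tracial-property} forwards returns $\int f(u)\,(g\x h)(u)\,du = \duo{f}{g\x h}$. The third form follows by one further cyclic rotation: $\int (f\x(g\x h))\,du = \int ((g\x h)\x f)\,du$ by the commutativity-under-integral half of \eqref{eq:tracial-property}, and this equals $\duo{g}{h\x f}$ on reassociating and applying \eqref{eq:tracial-property} once more.

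For the convolution row I convert to twisted products via \eqref{eq:Fourier-switch}. Writing $f\4 g = f\x \Fbar g$ and applying the row just proved gives $\duo{f\4 g}{h} = \duo{f}{\Fbar g\x h}$, and it remains to recognize $\Fbar g\x h = \check g\4 h$. This follows from $\Fbar g\x h = F\Fbar g\4 h$ together with $F\Fbar f = \check f$, which is immediate from $\Fbar f = F(\check f)$ and $F^2 = \text{Id}$. The third equality arises the same way from the alternative form $f\4 g = Ff\x g$ together with twisted-product cyclicity: $\duo{f\4 g}{h} = \duo{g}{h\x Ff} = \duo{g}{h\4 \Fbar(Ff)} = \duo{g}{h\4 \check f}$.

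Finally, the sesquilinear row is a mechanical consequence of the bilinear identities, the $*$-property, and the symmetry of $\duo{\cdot}{\cdot}$: unpacking $\scal{h}{f\x g} = 2^{-N}\duo{h^*}{f\x g}$ and applying the already-established cyclic identities with $h$ replaced by $h^*$ produces expressions involving $h^*\x f$ and $g\x h^*$, which upon using $(f^*\x h)^* = h^*\x f$ and $(h\x g^*)^* = g\x h^*$ are exactly the sesquilinear forms $\scal{f^*\x h}{g}$ and $\scal{h\x g^*}{f}$. I do not anticipate any substantive obstacle; the only care required is the reversal bookkeeping in the convolution row, namely the verification that $F\Fbar$ acts as $f\mapsto \check f$, which is a one-line consequence of the formula block preceding Definition~\ref{df:twisted-product}.
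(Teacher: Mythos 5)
Your proof is correct and follows the paper's argument essentially verbatim: you invoke the tracial identity \eqref{eq:tracial-property} together with associativity \eqref{eq:twisted-assoc} to show all three expressions in the first row equal $\int (f\x g\x h)\,du$, transfer the second row via \eqref{eq:Fourier-switch} (your bookkeeping $F\Fbar = \Fbar F = (\,\cdot\,)^\vee$ is right, relying on $\Fbar = F\circ(\,\cdot\,)^\vee$ and $F^2=\mathrm{Id}$), and derive the sesquilinear row by conjugation from the bilinear one. The paper states the same chain of reasoning more tersely, but there is no substantive difference.
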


\begin{proof}
{}From \eqref{eq:tracial-property} and the 
associativity~\eqref{eq:twisted-assoc}, we find that all three 
expressions in~\eqref{eq:product-transfer} are equal to
$\int (f \x g \x h)(u) \,du$. Now \eqref{eq:convol-transfer} follows
from~\eqref{eq:Fourier-switch}, and the third formula is immediate.
\end{proof}

\section{Duality and the Moyal algebra} 
\label{sec:Moyal-algebra}

Having established a calculus for functions in $\sS_2$ with twisted
product and convolution, we now extend it to a larger algebra of
tempered distributions. First we consider the twisted product or
convolution of a tempered distribution and a test function.

For $T \in \sS_2'$, $h \in \sS_2$, we write $\duo{T}{h} := T(h)$. We
also extend our previous notations in the usual way:
\begin{align*}
\duo{\sF T}{h} &:= \duo{T}{\sF h}, \quad
\duo{\check T}{h} := \duo{T}{\check h},
\\
\scal{T}{h} &:= 2^{-N}\duo{T^*}{h} := 2^{-N}\duo{T}{h^*}^*,
\\
\duo{FT}{h} &:= \duo{T}{\Fbar h}, \quad
\duo{\Fbar T}{h} := \duo{T}{Fh},
\\
\duo{\del_j T}{h} &:= -\duo{T,\del_j h}, \quad
\duo{\mu_j T}{h} := \duo{T}{\mu_j h}.
\end{align*}

\begin{defn} 
\label{df:transposed-product}
For $T \in \sS_2'$, $f,h \in \sS_2$, we define $T \x f$, $f \x T$,
$T \4 f$ and $f \4 T$ in $\sS_2'$ by
\begin{alignat}{2}
\duo{T \x f}{h} &:= \duo{T}{f \x h}, 
&\quad \duo{f \x T}{h} &:= \duo{T}{h \x f},
\label{eq:transposed-product} 
\\
\duo{T \4 f}{h} &:= \duo{T,\check f \4 h}, 
&\quad \duo{f \4 T}{h} &:= \duo{T}{h \4 \check f}.
\nonumber
\end{alignat}
The continuity of $\x$ and $\4$ in $\sS_2$ implies that each right
hand side is continuous and linear in $h$ and thus defines an element
of~$\sS_2'$. By \eqref{eq:product-transfer}
and~\eqref{eq:convol-transfer}, these are extensions of the
corresponding operations on~$\sS_2$.
\end{defn}

Throughout this paper, every dual space $E'$ of a locally convex space
$E$ is topologized by the strong dual topology, that of uniform
convergence on bounded subsets of~$E$. Then the four bilinear maps
$\: \sS_2' \x \sS_2 \to \sS_2'$ defined above are hypocontinuous.
Indeed, since $\sS_2$ and $\sS_2'$ are barrelled it
suffices~\cite{Horvath66,Schaefer66} to check separate continuity. For
example, for fixed~$T$, the map $f \mapsto \duo{T}{f \x h}$ is
continuous, uniformly so for $h$ in a bounded subset of~$\sS_2$ (by
the joint continuity of~$f$ and~$h$), so that $f \mapsto T \x f$ is
continuous from $\sS_2$ to~$\sS_2'$. For fixed $f$, $T \mapsto T \x f$
is the transpose of the continuous map $h \mapsto f \x h$ of $\sS_2$
into $\sS_2$, and as such is continuous from $\sS_2'$ to~$\sS_2'$.

All formulas of Sec.~\ref{sec:Schwartz-algebras} involving $f$ and $g$
extend to analogous formulae for $T$ and $f$ (e.g.,
$T \x f = FT \4 f = T \4 \Fbar f$). This is easily checked since $A =
B$ in $\sS_2'$ iff $\duo{A}{h} = \duo{B}{h}$ for all $h \in \sS_2$,
and we may reduce to the $\sS_2$ case
using~\eqref{eq:transposed-product}.

We write $\one$ for the constant function with value~$1$, and $\dl$
for the Dirac measure of mass one supported at~$0$. These are the
identities for the operations $\x$ and~$\4$:
$$
\one \x f = f \x \one = f,  \qquad  \dl \4 f = f \4 \dl = f.
$$
This follows from
$$
\duo{\one \x f}{h} = \duo{\one}{f \x h} = \int (f \x h)(u) \,du
= \duo{f}{h}
$$
by \eqref{eq:tracial-property}: thus $\one \x f = f$ as elements of
$\sS_2'$. We show below that $\one \x f$ is continuous, so 
$\one \x f = f$ as functions in $\sS_2$. The other half of the
equation follows from~\eqref{eq:Fourier-transfer}, since
$F\dl = \Fbar\dl = \sF\dl = \one$. From~\eqref{eq:Fourier-switch} we
also obtain the formulas
\begin{equation}
\one \4 f = \dl \x f = \Fbar f, \qquad f \4 \one = f \x \dl = Ff.
\label{eq:Fourier-cross} 
\end{equation}
Using the fact that $u_j = \mu_j\one$, we obtain
from~\eqref{eq:Leibniz-rules} the important identities
$$
u_j \x f = \mu_j f + i\,\delhat_j f,  \qquad 
f \x u_j = \mu_j f - i\,\delhat_j f,
$$
which in the $(q,p)$ notation become
\begin{alignat}{2}
q_j \x f &= \biggl( q_j + i\pd{}{p_j} \biggr) f, 
&\qquad p_j \x f &= \biggl( p_j - i\pd{}{q_j} \biggr) f,
\nonumber \\[\jot]
f \x q_j &= \biggl( q_j - i\pd{}{p_j} \biggr) f, 
&\qquad f \x p_j &= \biggl( p_j + i\pd{}{q_j} \biggr) f.
\label{eq:Weyl-operators} 
\end{alignat}

If $T \in \sS_2'$ and $f \in \sS_2$, the ordinary convolution $T*f$
is~\cite{Horvath66,Schwartz66} a smooth function in~$\sO_C$, whereas
the ordinary product $Tf$ is a ``rapidly decreasing distribution''
in~$\sO_C'$ but need not be smooth. In contrast,
\eqref{eq:Fourier-transfer} -- extended to~$\sS_2'$ -- shows that the
twisted product and twisted convolution have similar properties of
smoothness and of growth at infinity. To see this, we first note that
\eqref{eq:twisted-colvoln} may be rewritten as
$$
(f \4 g)(u) = \duo{\eps_{-u} \tau_u \check f}{g} 
= \duo{f}{\eps_u \tau_u \check g}.
$$
Thus, in convolution formulas such as
$(T*f)(u) = \duo{T}{\tau_u \check f}$, the translations $\tau_u$ are
replaced by $\eps_u \tau_u$ or $\eps_{-u} \tau_u$.

\begin{thm} 
\label{th:product-formulas}
If $T \in \sS_2'$, $f \in \sS_2$, then $T \x f$, $f \x T$, $T \4 f$
and~$f \4 T$ are smooth functions on~$\bR^{2N}$, given by
\begin{align}
(T \x f)(u) &= \duo{T}{\eps_u \tau_u Ff}, 
\nonumber \\
(f \x T)(u) &= \duo{T}{\eps_{-u} \tau_u\Fbar f} 
\label{eq:product-formula} 
\\
(T \4 f)(u) &= \duo{T}{\eps_u \tau_u\check f}, 
\nonumber \\
(f \4 T)(u) &= \duo{T}{\eps_{-u} \tau_u\check f}.
\label{eq:convol-formula} 
\end{align}
\end{thm}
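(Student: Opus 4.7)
The plan is to handle the two twisted-convolution formulas first, and then derive the twisted-product formulas from them using the Fourier-switch identities \eqref{eq:Fourier-switch} extended to distributions.

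For the convolution case, I would define $g(u) := \duo{T}{\eps_u \tau_u \check f}$ and show directly that $g$ is the smooth function $T \4 f$. Smoothness comes from the fact that $u \mapsto \eps_u \tau_u \check f$ is a $C^\infty$ map from $\bR^{2N}$ into the Fréchet space $\sS_2$: translation is differentiable in $\sS_2$ with derivative $-\del_j$, multiplication by $\eps_u$ is smooth in~$u$, and $T \in \sS_2'$ is continuous, so all partials of $g$ exist and may be computed by differentiating the test function inside the bracket. To identify $g$ with $T \4 f$ as distributions, I would pair with an arbitrary $h \in \sS_2$ and compute
$$
\int g(u) h(u) \,du = \int \duo{T}{\eps_u \tau_u \check f}\, h(u)\,du
= \duo[\bigg]{T}{\int h(u)\,\eps_u \tau_u \check f \,du},
$$
the interchange being justified because the $\sS_2$-valued integrand is rapidly decreasing in each seminorm (since $h \in \sS_2$) and $T$ is continuous on the resulting Bochner-type integral. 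A direct substitution shows that the vector inside $\duo{T}{\cdot}$ is exactly $\check f \4 h$: evaluating at~$v$ gives $\int h(u) e^{iu'Jv} f(u - v)\,du$, which equals $\int \check f(v - t) h(t) e^{-iv'Jt}\,dt = (\check f \4 h)(v)$ using $u'Jv = -v'Ju$. Hence $\int g(u) h(u)\,du = \duo{T}{\check f \4 h} = \duo{T \4 f}{h}$, giving $T \4 f = g$. The formula for $f \4 T$ is proved by the same argument with the sign of the translation/exponential reversed, matching the definition $\duo{f \4 T}{h} = \duo{T}{h \4 \check f}$.

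For the product formulas, I would extend \eqref{eq:Fourier-switch} to $\sS_2'$, which the excerpt already notes is automatic from the duality definition: $T \x f = T \4 \Fbar f$ and $f \x T = Ff \4 T$. Applying the two convolution formulas just established,
$$
(T \x f)(u) = (T \4 \Fbar f)(u) = \duo{T}{\eps_u \tau_u (\Fbar f)^{\vee}},
\qquad
(f \x T)(u) = (Ff \4 T)(u) = \duo{T}{\eps_{-u} \tau_u (Ff)^{\vee}},
$$
and the identities $\Fbar f = (Ff)^{\vee}$ and $Ff = (\Fbar f)^{\vee}$ from the Fourier list convert these into the stated formulas. Smoothness is inherited from the convolution case.

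The only non-routine point is the Fubini/interchange step: justifying that the $\sS_2$-valued integral $\int h(u)\,\eps_u \tau_u \check f\,du$ exists and that $T$ commutes with it. This is where the rapid decay of $h$ combined with the fact that the $\sS_2$-seminorms of $\eps_u \tau_u \check f$ grow only polynomially in~$u$ is used, so convergence holds in every seminorm; continuity of $T$ then yields the exchange. Once that is in place, everything else is algebraic manipulation with the Fourier identities already tabulated.
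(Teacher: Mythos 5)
Your argument is correct, but it takes a genuinely different route from the paper's. You identify $g(u) := \duo{T}{\eps_u\tau_u\check f}$ with $T \4 f$ by pairing $g$ against a test function $h$ and exchanging $T$ with a vector-valued integral
$\int h(u)\,\eps_u\tau_u\check f\,du$, which you then compute pointwise to be $\check f \4 h$; this reduces the identity to a Fubini-type interchange for an $\sS_2$-valued integral, justified by the polynomial growth of the seminorms of $\eps_u\tau_u\check f$ against the rapid decay of $h$. The paper instead avoids any integration step: it observes that the right-hand sides define bilinear maps on $\sS_2'\x\sS_2$ that are jointly continuous in $(u,f)$ and, via the transposition $\duo{T}{\eps_u\tau_u\check f}=\duo{\eps_{-u}\tau_u\check T}{f}$, separately continuous in~$T$; these are therefore separately continuous extensions of $\x$ and $\4$ from the dense subspace $\sS_2\x\sS_2$, and uniqueness of such extensions identifies them with the duality-defined $T\x f$, etc. Your approach is more explicit and self-contained but requires you to set up Bochner-type integration in a Fréchet space; the paper's is softer, trading that for the (standard) uniqueness-of-continuous-extension principle. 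Your derivation of the product formulas from the convolution ones via $T\x f = T\4\Fbar f$, $f\x T = Ff\4 T$ and the involutivity $(\Fbar f)^\vee = Ff$, $(Ff)^\vee = \Fbar f$ is correct and is a further simplification: the paper treats the four cases in parallel, whereas you reduce to two. For smoothness you invoke smoothness of the $\sS_2$-valued map $u\mapsto\eps_u\tau_u\check f$; the paper establishes the same fact concretely via a difference-quotient limit, obtaining the Leibniz rule $\del_j(T\x f) = \del_j T\x f + T\x\del_j f$ as a by-product, which your phrasing leaves implicit but is recoverable from the chain rule applied to the smooth map.
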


\begin{proof}
If $h \in \sS_2$, the maps $u \mapsto \tau_u h$, $u \mapsto \eps_u h$
are continuous from $\bR^{2N}$ to~$\sS_2$, so the right hand sides of
these formulas are jointly continuous in $u$ and~$f$. By
transposition, since
$\duo{T}{\eps_u \tau_u\check f} = \duo{\eps_{-u} \tau_u \check T}{f}$,
they are also continuous in~$T$. These right hand sides define
separately continuous extensions to $\sS_2' \x \sS_2$ of the twisted
product and convolution on the dense subspace $\sS_2 \x \sS_2$; since
these extensions are necessarily unique, they coincide with $T \x f$,
etc., as defined earlier.

Now
$$
\del_j h(u) 
= \pm \lim_{c\to 0} c^{-1} \bigl( h(u + ce_j) - h(u) \bigr),
$$
where $e_k$ is the $k$th basis vector in $\bR^{2N}$, whenever this
limit exists. By calculation, we find that
$$
\lim_{c\to 0} c^{-1}
\duo{T}{\eps_{u+ce_j} \tau_{u+ce_j} Ff - \eps_u \tau_u Ff} 
= (\del_j T \x f)(u) + (T \x \del_j f)(u)
$$
as expected, so by induction $T \x f$ is infinitely differentiable, 
with the Leibniz formula 
$\del_j(T \x f) = \del_j T \x f + T \x \del_j f$ holding as an 
equality between smooth functions. The other three cases are similar.
\end{proof}

Let $\sE_2$ denote the space of smooth functions on $\bR^{2N}$, with
the topology of uniform convergence of all derivatives on compact
sets. Ordinary convolution operators are precisely those which commute
with translations; we may characterize the twisted convolution
operators as those which commute with ``twisted translations''.

\begin{thm} 
\label{th:twisted-translations}
Let $L \: \sS_2 \to \sE_2$ be linear and continuous; then there is a
unique $T \in \sS_2'$ with $L(f) = T \x f$ for all $f \in \sS_2$ iff
$L$ commutes with $\set{\eps_u \tau_u : u \in \bR^{2N}}$.
\end{thm}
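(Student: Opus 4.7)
The plan is to treat the two implications separately and uniformity throughout; the uniqueness of $T$ will be essentially free. For the forward direction, assume $L(f) = T \x f$. The operator $f \mapsto T \x f$ is defined intrinsically in terms of the symplectic structure, so its commutation with $\eps_u\tau_u$ should be built into formula~\eqref{eq:product-formula}. I would substitute $\eps_v\tau_v f$ into $(T \x g)(u) = \duo{T}{\eps_u\tau_u Fg}$ and push the symplectic Fourier transform through using the intertwining rules $F(\eps_s h) = \tau_{-s}Fh$ and $F(\tau_s h) = \eps_{-s}Fh$; the resulting integrand rearranges to $e^{iv'Ju}\eps_{u-v}\tau_{u-v}Ff$, and pairing with $T$ yields $e^{iv'Ju}(T \x f)(u-v) = (\eps_v\tau_v(T \x f))(u)$, as required.

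For the reverse direction, I need to manufacture a candidate $T$. The identity $(T \x f)(0) = \duo{T}{Ff}$ suggests defining $T$ on a test function $f$ by $\duo{T}{f} := L(Ff)(0)$; this is a composition of continuous maps ($F$ is an isomorphism of $\sS_2$ and evaluation at the origin is continuous on $\sE_2$), so $T \in \sS_2'$. To verify $L(g) = T \x g$, I evaluate at an arbitrary $v$: by the definition of $T$ and formula~\eqref{eq:product-formula}, $(T \x g)(v) = L(F(\eps_v\tau_v Fg))(0)$. Applying the intertwining rules in sequence and using $F^2 = \text{Id}$ collapses $F(\eps_v\tau_v Fg)$ to $\tau_{-v}\eps_{-v}g$. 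The linchpin is that $\tau_v$ and $\eps_v$ individually commute, because $v'Jv = 0$, so $\tau_{-v}\eps_{-v}g = \eps_{-v}\tau_{-v}g$, and the commutation hypothesis on $L$ then gives $L(\eps_{-v}\tau_{-v}g)(0) = (\eps_{-v}\tau_{-v}L(g))(0) = L(g)(v)$, as desired.

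Uniqueness falls out trivially: if $T \x f = 0$ for all $f \in \sS_2$, evaluating at the origin gives $\duo{T}{Ff} = 0$, and since $F$ maps $\sS_2$ onto itself, $T = 0$. The main obstacle I anticipate is purely notational — keeping track of which side of the formula a symplectic Fourier transform is absorbed into, and remembering that the transpose of $F$ is $\Fbar$, not $F$ itself. Once the observation that $\tau_v$ and $\eps_v$ commute at equal parameters is in hand, both implications reduce to routine bookkeeping with the intertwiners already tabulated in Section~\ref{sec:Schwartz-algebras}.
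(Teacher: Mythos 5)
Your proposal is correct and follows essentially the same route as the paper: the forward direction pushes $\eps_v\tau_v$ through formula~\eqref{eq:product-formula} using the intertwining rules for $F$, and the reverse direction constructs $T$ by $\duo{T}{h} := L(Fh)(0)$ and unwinds via $F^2 = \text{Id}$ together with the commutativity of $\tau_v$ and $\eps_v$ (which the paper uses implicitly in the same spots). Your uniqueness remark is the same observation the paper records parenthetically.
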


\begin{proof}
From \eqref{eq:product-formula} we find that 
\begin{align*}
\eps_v \tau_v(T \x f)(u)
&= e^{iv'Ju} (T \x f)(u - v) 
= e^{iv'Ju} \duo{T}{\eps_{u-v} \tau_{u-v} Ff}
\\
&= e^{iv'Ju} \duo{T}{t \mapsto e^{i(u-v)'Jt} Ff(t - u + v)}
\\
&= \duo{T}{t \mapsto e^{i(u'Jt - v'J(t-u))} \eps_{-v} Ff(t - u)}
\\
&= \duo{T}{t \mapsto e^{iu'Jt} \tau_{-v} \eps_{-v} Ff(t - u)} 
 = \duo{T}{\eps_u \tau_u F(\eps_v \tau_v f)}
\end{align*}
so that $f \mapsto T \x f$ commutes with any~$\eps_v \tau_v$.

On the other hand, given $L \: \sS_2 \to \sE_2$ which commutes with
all $\eps_v \tau_v$, we define $T \in \sS_2'$ by
$\duo{T}{h} := L(Fh)(0)$. (So $T$ is unique.) For $u \in \bR^{2N}$, 
$f \in \sS_2$, we then obtain
\begin{align*}
(T \x f)(u) &= \duo{T,\eps_u \tau_u Ff} 
= \duo{T}{F(\eps_{-u} \tau_{-u}f)} 
\\
&= L(\eps_{-u} \tau_{-u}f)(0) = \eps_{-u} \tau_{-u} (Lf)(0)
\\
&= \tau_{-u}(Lf)(0) = (Lf)(u).
\tag*{\qed}
\end{align*}
\hideqed
\end{proof}

We can now define the Moyal $*$-algebra $\sM$. We define it as the
intersection of two spaces $\sM_L$ and $\sM_R$ which, to the best of
our knowledge, were first considered by Antonets~\cite{Antonets78}.

\begin{defn} 
\label{df:Moyal-algebra}
\begin{enumerate}
\item[(1)]
$\sM_L := 
\set{S \in \sS_2' : S \x f \in \sS_2 \text{ for all } f \in \sS_2}$;
\item[(2)]
$\sM_R := 
\set{R \in \sS_2' : f \x R \in \sS_2 \text{ for all } f \in \sS_2}$;
\item[(3)]
$\sM := \sM_L \cap \sM_R$.
\end{enumerate}
\end{defn}

Note that $S \in \sM_L$ iff $S^* \in \sM_R$ since
$(S \x f)^* = f^* \x S^*$. Since $\sS_2$ is a Fréchet space, the
maps $f \mapsto S \x f$, $f \mapsto f \x R$ are continuous from
$\sS_2$ to $\sS_2$ by the closed graph theorem.

It is clear that $\sS_2 \subset \sM$ and that $\one \in \sM$. The
formulas~\eqref{eq:Fourier-cross} show that $\dl \in \sM$. Now, if
$S \in \sM_L$ and $f \in \sS_2$, we have
\begin{align*}
\del_j S \x f &= \del_j(S \x f) - S \x \del_j f,
\\
\mu_j S \x f &= \mu_j(S \x f) + iS \x \delhat_j f,
\end{align*}
and so $\del_j S \in \sM_L$ and $\mu_j S \in \sM_L$; thus $\sM_L$, and
similarly $\sM_R$ and $\sM$, is closed under partial differentation
and multiplication by polynomials. Hence, in particular, all
polynomials lie in~$\sM$.

We now extend the twisted product to the case of one distribution
in~$\sM$ and one in~$\sS_2'$ (so that $\sS_2'$ is an $\sM$-bimodule).

\begin{defn} 
\label{df:Moyal-bimodule}
If $R \in \sM_R$, $S \in \sM_L$, $T \in \sS_2'$, we define $T \x S$, 
$R \x T$ in $\sS_2'$ by
\begin{equation}
\duo{T \x S}{h} := \duo{T}{S \x h}, \qquad
\duo{R \x T}{h} := \duo{T}{h \x R},
\label{eq:Moyal-bimodule} 
\end{equation}
for all $h \in \sS_2$. Since the right hand sides are continuous
in~$h$, $T \x S$ and $R \x T$ are defined in~$\sS_2'$.
\end{defn}

If $R,S \in \sM$, $T \in \sS_2'$ and $f,g,h \in \sS_2$, we may compute:
\begin{align*}
\duo{(T \x f) \x g}{h} &= \duo{T \x f}{g \x h} = \duo{T}{f \x g \x h} 
= \duo{(T \x (f \x g)}{h},
\\
\duo{(R \x S) \x f}{h} &= \duo{R \x S}{f \x h} = \duo{R}{S \x f \x h}
= \duo{R \x (S \x f)}{h}.
\end{align*}
In particular, $(R \x S) \x f \in \sS_2$ for $f \in \sS_2$, so
$R \x S \in \sM_L$. Then
$$
\duo{(T \x R) \x S}{h} = \duo{T \x R}{S \x h} = \duo{T}{R \x S \x h}
= \duo{T \x (R \x S)}{h}.
$$
We conclude that $\sM$ is an associative algebra; in fact, it is a
$*$-algebra since, for $R,S \in \sM$,
\begin{align*}
\duo{(R \x S)^*}{h}
&= \duo{R \x S}{h^*}^* = \duo{R}{S \x h^*}^* 
\\
&= \duo{R^*}{h \x S^*} = \duo{S^* \x R^*}{h}.
\end{align*}•
We may also note that since $S \4 f = S \x \Fbar f$ and
$f \4 R = Ff \x R$, we have 
\begin{align*}
\sM_L 
&= \set{S \in \sS_2' : S \4 f \in \sS_2 \text{ for all } f \in \sS_2},
\\
\sM_R 
&= \set{R \in \sS_2' : f \4 R \in \sS_2 \text{ for all } f \in \sS_2},
\end{align*}
so $\sM$ is also a $*$-algebra under $\4$, where we define
$\duo{T \4 S}{h} := \duo{T}{\check S \4 h}$ and
$\duo{R \4 T}{h} := \duo{T}{h \4 \check R}$. The invariance of~$\sM$
under the several Fourier transforms now follows easily. One easily
checks that the formulas of Sec.~\ref{sec:Schwartz-algebras} remain
valid when $f,g$ are replaced by $R,S \in \sM$.

\begin{remk} 
We show below that $\set{f \x g : f,g \in \sS_2}$ equals $\sS_2$. Thus
$\sM$ is the maximal $*$-algebra which we may define by duality. For,
if $T \in \sS_2'$ with $T \x f, f \x T \in \sM$ for all $f \in \sS_2$,
then by writing $f = g \x h$ we see that $T \x f$ and $f \x T$ both
lie in $\sS_2$, since $T \x f = (T \x g) \x h$ and
$f \x T = g \x (h \x T)$; hence $T \in \sM$.
\end{remk}

\begin{remk} 
$\sM$, $\sM_L$, $\sM_R$ and $\sS_2'$ are distinct spaces of
distributions.
\end{remk}

\section{Regularity properties} 
\label{sec:Moyal-regularity}

In this section we consider in more detail the growth conditions on
resultants of twisted products or convolutions. We identify a space of
smooth functions, $\sO_T$, which contains all functions defined by
\eqref{eq:product-formula} and~\eqref{eq:convol-formula}, and we show
that $\sO_T$ is a normal space of distributions. As a consequence, its
dual space $\sO'_T$ contains all distributions of compact support and
is contained in~$\sM$.

If $(E_i)_{i\in I}$ is a collection of locally convex spaces, the
\textit{projective topology} on the intersection
$E := \bigcap_{i\in I} E_i$ is the weakest locally convex topology
such that all inclusions $E \subset E_i$ are continuous. The
\textit{inductive topology} on the union $F := \bigcup_{i\in I} E_i$
is the strongest locally convex topology such that all inclusions
$E_i \subset F$ are continuous. We will use the projective topology on
decreasing intersections, and the inductive topology on increasing
unions, without further comment.

For $f \in C^m(\bR^{2N})$, $k,m \in \bN$, let
\begin{equation}
p_{k,m}(f) := \sup\set{(1+u^2)^{-k-|\al|/2} |\del^\al f(u)|
: u \in \bR^{2N},\ |\al| \leq m}
\label{eq:Weyl-seminorms} 
\end{equation}
(where $u^2 = u'u = u_1^2 +\cdots+ u_{2N}^2$), and let $\sV_k^m$ be
the space of all $f \in C^m$ such that the function
$(1 + u^2)^{-k-|\al|/2} \,\del^\al f(u)$ vanishes at infinity for all
$|\al| \leq m$, normed by $p_{k,m}$. Now let
\begin{equation}
\sV_k := \bigcap_{m\in\bN} \sV_k^m, \qquad
\sO_T := \bigcup_{k\in\bN} \sV_k.
\label{eq:OT-filtration} 
\end{equation}

Let
$$
q_{k,m}(f) := \sup\set{(1 + u^2)^{-k} |\del^\al f(u)|
: u \in \bR^{2N},\ |\al| \leq m};
$$
then the set $\set{f \in C^m : (1 + u^2)^{-k}\,\del^\al f(u)
\text{ vanishes at infinity for } |\al| \leq m}$, normed by~$q_{k,m}$,
is Horváth's space $\sS_{-k}^m$ \cite{Horvath66}. Using the notations 
of~\cite{Horvath66}, we have
\begin{alignat}{2}
\sS_{-k} &:= \bigcap_{m\in\bN} \sS_{-k}^m,
&\qquad \sO_C &:= \bigcup_{k\in\bN} \sS_{-k},
\nonumber \\
\sO_C^m &:= \bigcup_{k\in\bN} \sS_{-k}^m,
&\qquad \sO_M &:= \bigcap_{m\in\bN} \sO_C^m.		
\label{eq:OC-filtration} 
\end{alignat}
Here $\sO_M$ consists of smooth functions for which each derivative is
polynomially bounded; $\sO_C$ is a subset of $\sO_M$, wherein the
degree of the polynomial bound is independent of the derivative; and
$\sO_T$ is the subset of $\sO_M$ wherein that degree increases
linearly with the order of the derivative. Indeed, this leads to the 
following proposition.

\begin{prop} 
\label{pr:OT-sequence}
$\sO_C \subset \sO_T \subset \sO_M$ with continuous inclusions.
\end{prop}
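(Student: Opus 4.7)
The plan is to unpack the definitions of the three spaces and show both inclusions at the level of the basic normed building blocks $\sS_{-k}^m$ and $\sV_k^m$, then lift to the projective/inductive limits.

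For the first inclusion, $\sO_C \subset \sO_T$, I would fix $k,m$ and compare the defining seminorms. Since $(1+u^2)^{-k-|\al|/2} \leq (1+u^2)^{-k}$ (because $1+u^2 \geq 1$), the seminorm estimate $p_{k,m}(f) \leq q_{k,m}(f)$ is immediate, so the identity map embeds $\sS_{-k}^m$ continuously into $\sV_k^m$. To check membership, if $f \in \sS_{-k}^m$ write
$$
(1+u^2)^{-k-|\al|/2}\del^\al f(u)
= (1+u^2)^{-|\al|/2}\bigl[(1+u^2)^{-k}\del^\al f(u)\bigr];
$$
the bracketed factor vanishes at infinity by hypothesis and the prefactor is bounded, so $f \in \sV_k^m$. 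Passing to the projective intersection over $m$ gives a continuous inclusion $\sS_{-k} \subset \sV_k$, and then the universal property of the inductive topology on $\sO_C = \bigcup_k \sS_{-k}$ yields the continuous inclusion $\sO_C \subset \sO_T$.

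For the second inclusion, $\sO_T \subset \sO_M$, the idea is the reverse exchange: given $f \in \sV_k$ and a target smoothness order $m$, I would choose $k' := k + \lceil m/2\rceil$, so that for every $|\al| \leq m$ we have $k+|\al|/2 - k' \leq 0$. Writing
$$
(1+u^2)^{-k'}\del^\al f(u)
= (1+u^2)^{k+|\al|/2 - k'}\bigl[(1+u^2)^{-k-|\al|/2}\del^\al f(u)\bigr],
$$
the prefactor is bounded by $1$ and vanishes at infinity when $|\al|<2\lceil m/2\rceil$ (and equals $1$ in the borderline case, where the bracket already vanishes at infinity). This gives both the membership $f \in \sS_{-k'}^m$ and the seminorm estimate $q_{k',m}(f) \leq p_{k,m}(f)$. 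Hence $f \in \sO_C^m$ for every $m$, so $f \in \sO_M$, and the estimate guarantees that the map $\sV_k \to \sS_{-k'}^m \hookrightarrow \sO_C^m$ is continuous. Taking the projective topology on $\sO_M = \bigcap_m \sO_C^m$ and then the inductive topology on $\sO_T = \bigcup_k \sV_k$ produces the continuous inclusion $\sO_T \subset \sO_M$.

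The argument is essentially a bookkeeping exercise; there is no serious obstacle beyond being attentive to the order of quantifiers when threading through projective intersections and inductive unions. The only subtle point is choosing $k'$ large enough relative to $m$ so that the extra factor of $(1+u^2)^{|\al|/2}$ absorbed into the $\sO_T$-seminorms can be dominated by the $\sO_C^m$-weight $(1+u^2)^{k'}$; the choice $k' = k + \lceil m/2\rceil$ handles this uniformly for all $|\al| \leq m$.
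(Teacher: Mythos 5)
Your argument is correct, and it is the natural proof that the paper deliberately omits as ``easy.'' The seminorm comparisons $p_{k,m}(f) \leq q_{k,m}(f)$ (giving $\sS_{-k}^m \subset \sV_k^m$) and $q_{k',m}(f) \leq p_{k,m}(f)$ with $k' = k + \lceil m/2 \rceil$ (giving $\sV_k^m \subset \sS_{-k'}^m$) are exactly the right estimates, and your bookkeeping with the projective and inductive topologies is sound. The only blemish is a slight overcomplication in the second step: since $f \in \sV_k$ implies the bracketed factor $(1+u^2)^{-k-|\al|/2}\del^\al f(u)$ already vanishes at infinity, there is no need to distinguish the borderline case where the prefactor equals $1$; boundedness of the prefactor by $1$ suffices to transfer the vanishing.
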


The proof is easy and will be omitted.

\begin{thm} 
\label{th:Moyal-regularity}
If $T \in \sS_2'$, $f \in \sS_2$, then $T \x f$, $f \x T$, $T \4 f$
and $f \4 T$ all lie in~$\sO_T$. Moreover, these four bilinear maps of
$\sS_2' \x \sS_2$ into~$\sO_T$ are separately continuous.
\end{thm}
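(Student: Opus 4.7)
The plan is to combine the explicit pointwise formula $(T\x f)(u) = \duo{T}{\eps_u \tau_u Ff}$ of Theorem~\ref{th:product-formulas} with the Leibniz rule $\del^\al(T\x f) = \sum_{\bt\leq\al}\binom{\al}{\bt} \del^\bt T \x \del^{\al-\bt} f$ established there, so that the whole problem reduces to bounding a single smooth function $(S\x g)(u)$ (with $S = \del^\bt T$, $g = \del^{\al-\bt} f$) by $(1+u^2)^{K+|\al|/2}$ for a suitable integer $K$ depending only on $T$.

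First I would invoke temperedness of $T$: there exist $k_0,m_0\in\bN$ and $C_T>0$ with
\[
|\duo{T}{h}| \leq C_T \sup\set{(1+v^2)^{k_0}\,|\del^\bt h(v)| : v\in\bR^{2N},\ |\bt|\leq m_0}
\]
for all $h\in\sS_2$, which implies that $\del^\bt T$ satisfies the same estimate with $(k_0,m_0+|\bt|)$ replacing $(k_0,m_0)$. Applied to $h = \eps_u \tau_u Ff$, I would compute $\del^\bt h(v)$ by Leibniz, noting that $\del^\mu e^{iu'Jv}$ is a polynomial in $u$ of degree $|\mu|$ times $e^{iu'Jv}$, so that
\[
|\del^\bt(\eps_u\tau_u g)(v)| \leq C_\bt\,(1+|u|)^{|\bt|} \sum_{|\ga|\leq|\bt|}|\del^\ga g(v-u)|.
\]
Then I would use the elementary Peetre inequality $(1+v^2)^{k_0}\leq 2^{k_0}(1+(v-u)^2)^{k_0}(1+u^2)^{k_0}$ to substitute $w=v-u$ and obtain
\[
\sup_v(1+v^2)^{k_0}|\del^\bt(\eps_u\tau_u Fg)(v)|
\leq C\,(1+u^2)^{k_0+|\bt|/2}\,\pi_{k_0,|\bt|}(Fg),
\]
where $\pi_{k_0,|\bt|}$ is a Schwartz seminorm of $Fg$. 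Pairing this with the temperedness bound for $\del^\bt T$ and summing via Leibniz gives
\[
|\del^\al(T\x f)(u)| \leq C_{T,f,\al}\,(1+u^2)^{k_0 + m_0/2 + |\al|/2}.
\]
Setting $K := \lceil k_0 + m_0/2\rceil + 1$, the extra factor $(1+u^2)^{-1}$ forces $(1+u^2)^{-K-|\al|/2}|\del^\al(T\x f)(u)|\to 0$ at infinity for every~$\al$, so $T\x f\in\sV_K\subset\sO_T$. The three companion products are handled identically: $(f\x T)(u) = \duo{T}{\eps_{-u}\tau_u\Fbar f}$, $(T\4 f)(u) = \duo{T}{\eps_u \tau_u\check f}$ and $(f\4 T)(u) = \duo{T}{\eps_{-u}\tau_u\check f}$ differ only by a sign in the oscillatory exponent and by replacing $Ff$ (resp.\ $\Fbar f$) with $\check f$, neither of which affects any of the estimates.

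For separate continuity, the explicit inequality $p_{K,m}(T\x f) \leq C_T\cdot\pi(f)$, where $\pi$ is a continuous seminorm on~$\sS_2$, directly yields the continuity of $f\mapsto T\x f$ from $\sS_2$ into $\sV_K^m$ (hence into $\sV_K$ by taking the projective limit, and into $\sO_T$ by the canonical inclusion). Continuity in $T$ for fixed $f$ is handled by the Banach--Steinhaus/barrelledness argument already used in Section~\ref{sec:Moyal-algebra}: a strongly bounded set $B\subset\sS_2'$ is equicontinuous, so a common pair $(k_0,m_0)$ serves for all $T\in B$, and the same estimate yields the requisite uniform bound on $\{T\x f : T\in B\}$ in~$\sO_T$.

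The main obstacle is purely bookkeeping: tracking the degree of polynomial growth produced by each $|\al|$-th derivative so that it stays on the diagonal $K+|\al|/2$ rather than $K+|\al|$, which is exactly where the Peetre bound $(1+|u|)^{|\bt|}\leq (1+u^2)^{|\bt|/2}$ does the decisive work and explains why the target space is $\sO_T$ and not merely $\sO_M$.
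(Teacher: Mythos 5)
Your proof is correct, but it takes a genuinely different route from the paper. The paper first reduces all four cases to $T\4 f$, then differentiates the twisted‑translation identity $\eps_u\tau_u(T\4 f)=T\4(\eps_u\tau_u f)$ at $u=tJe_j$ to obtain the relation $\delhat^\al(T\4 f)(u)=\sum_{\bt\leq\al}P_\bt(u)\,(T\4 f_\bt)(u)$ with polynomial weights $P_\bt$ of degree at most $|\bt|$, and finally invokes the structure theorem $T=\delhat^\ga Q$ ($Q$ polynomially bounded and continuous) together with a direct integral Peetre bound on $Q\4 f$. You instead stay with $T\x f$, feed the explicit pairing $(T\x f)(u)=\duo{T}{\eps_u\tau_u Ff}$ into the seminorm form of temperedness, iterate the $\x$-Leibniz rule $\del^\al(T\x f)=\sum_{\bt\leq\al}\binom{\al}{\bt}\,\del^\bt T\x\del^{\al-\bt}f$ from Theorem~\ref{th:product-formulas}, and harvest the crucial half‑power growth $(1+u^2)^{|\al|/2}$ from the $u$-polynomial produced by $\del^\mu_v e^{iu'Jv}$, followed by the Peetre trick. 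Both arguments turn on the same Peetre mechanism — each extra derivative costs only $(1+u^2)^{1/2}$, which is precisely what places the result in $\sO_T$ rather than merely $\sO_M$ — but your decomposition bypasses the representation $T=\delhat^\ga Q$ in favour of the equivalent seminorm estimate, which is somewhat cleaner bookkeeping. The one place where you are more implicit than the paper is continuity in $T$: you obtain boundedness (bounded sets of $\sS_2'$ go to bounded sets of $\sO_T$) via Banach--Steinhaus, and then need the fact that $\sS_2'$ is bornological (true, since $\sS_2$ is nuclear Fréchet) to upgrade boundedness to continuity, whereas the paper exhibits an explicit polar neighbourhood $B^\circ\subset\sS_2'$ mapping into a prescribed $0$-neighbourhood of $\sO_T$. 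Your appeal is legitimate, but worth flagging since it relies on an abstract property rather than a direct estimate.
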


\begin{proof}
It suffices to consider the case of $T \4 f$.

Differentiating the equality
$\eps_u \tau_u(T \4 f) = T \4 (\eps_u \tau_u f)$, with $u = tJe_j$, at
$t = 0$, we get 
$$
(\delhat_j + i\mu_j)(T \4 f) = T \4 (\delhat_j + i\mu_j)f,
$$
so that 
$$
\delhat_j(T \4 f) = T \4 (\delhat_j + i\mu_j)f - i\mu_j(T \4 f),
$$
and by induction we get, for $\al \in \bN^{2N}$,
\begin{equation}
\delhat^\al(T \4 f)(u) = \sum_{\bt\leq\al} P_\bt(u) (T \4 f_\bt)(u)
\label{eq:Leibniz-again} 
\end{equation}
where $P_\bt(u)$ is a polynomial of degree at most~$|\bt|$, and 
$f_\bt \in \sS_2$, for $\bt \leq \al$. From~\eqref{eq:Weyl-seminorms},
we need only show that $T \4 f$ is polynomially bounded.

Any $T \in \sS_2'$ can be written~\cite{Schwartz66} as
$T = \delhat^\ga Q$, with $\ga \in \bN^{2N}$, where $Q$ is a
polynomially bounded continuous function on~$\bR^{2N}$.
Recalling~\eqref{eq:Leibniz-rules},
$\delhat_j Q \4 f = \delhat_j(Q \4 f) - iQ \4 \mu_j f$, and by
iteration
\begin{equation}
T \4 f = \delhat^\ga Q \4 f
= \sum_{\eps\leq\ga} \delhat^\eps(Q \4 g_\eps)
\label{eq:convol-expand} 
\end{equation}
for certain $g_\eps \in \sS_2$. Combining this
with~\eqref{eq:Leibniz-again}, we need only show that $Q \4 f$ is
polynomially bounded.

If $|Q(t)| \leq C(1 + t^2)^k$, then
\begin{align*}
|(Q \4 f)(u)|
&= |\int Q(t) e^{iu'Jt} f(u - t) \,dt| 
\leq \int C(1 + t^2)^k |f(u - t)| \,dt
\\
&\leq \int 2^k C(1 + u^2)^k (1 + (u - t)^2)^k |f(u - t)| \,dt
= K(1 + u^2)^k,
\end{align*}
where $K = 2^k C \int (1 + s^2)^k |f(s)| \,ds$ is finite since
$f \in \sS_2$.

Since $K$ depends continuously on $f$, the map $f \mapsto Q \4 f$ is 
continuous from $\sS_2$ into $\sV_{k+1}$. Since each $g_\eps$ 
in~\eqref{eq:convol-expand} depends continuously of $f$, and since 
$p_{k,m}(\del_j f) \leq p_{k-1,m+1}(f)$, so that 
$\del_j \: \sV_{k-1} \to \sV_k$ is continuous, we conclude that 
$f \mapsto T \4 f$ is continuous from $\sS_2$ into $\sV_{k+|\ga|+1}$ 
and hence from $\sS_2$ into~$\sO_T$.

Now fix $f \in \sS_2$ and let $T$ vary in $\sS_2'$. Then $K$ is a
multiple of $C$, so if $V$ is a neighbourhood of zero in~$\sO_T$, then
$V \cap \sV_{k+|\ga|+1}$ is a zero-neighbourhood in $\sV_{k+|\ga|+1}$
and thus contains all $T \4 f$ with $T = \delhat^\ga Q$ and
$|Q(t)| \leq C(1 + t^2)^k$ for $C \leq c_{k\ga}$ with $c_{k\ga}$ small
enough. Let
$$
B := \biggl\{ h \in \sS_2 : \int (1 + u^2)^r \,|\del^\al h(u)| \,du 
\leq \frac{1}{c_{r\al}} \text{ for all } r \in \bN,\ \al \in \bN^{2N}
\biggr\}.
$$
Then $B$ is bounded in $\sS_2$ and its polar $B^\circ$ is a 
neighbourhood of~$0$ in $\sS_2'$ such that $T \4 f \in V$ whenever
$T \in B^\circ$.
\end{proof}

\begin{remk} 
The fact that $T \x f \in \sO_M$ has been noted in~\cite{Anderson72}.
\end{remk}

A \textit{normal space of distributions} (on $\bR^{2N}$)
\cite{Horvath66} is a locally convex space $\sR$ where
$\sD \subset \sR \subset \sD'$ with continuous inclusions and $\sD$ is
dense in~$\sR$. (Here $\sD$ is the space of test functions of compact
support on~$\bR^{2N}$.)

\begin{lema} 
\label{lm:one-normal-space}
$\sV_k^m$ is a normal space of distributions.
\end{lema}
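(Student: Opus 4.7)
The plan is to verify the three requirements for a normal space of distributions in turn: the continuous inclusion $\sD \hookrightarrow \sV_k^m$, the continuous inclusion $\sV_k^m \hookrightarrow \sD'$, and density of $\sD$ in $\sV_k^m$. The first is immediate: any $\phi \in \sD$ has all derivatives continuous of compact support, so $(1+u^2)^{-k-|\al|/2}\del^\al\phi$ vanishes at infinity trivially, and for $\phi$ supported in a compact set $K$ the seminorm $p_{k,m}(\phi)$ is bounded by a $K$-dependent constant times $\max_{|\al|\leq m}\|\del^\al\phi\|_\infty$; by definition of the inductive topology on~$\sD$, this gives continuity. For the second inclusion, each $f \in \sV_k^m$ is locally bounded by $(1+u^2)^k p_{k,0}(f)$, hence locally integrable, so it defines a distribution via $\phi \mapsto \int f\phi$, and for $\phi$ supported in $K$ the estimate $|\duo{f}{\phi}| \leq C_K\, p_{k,m}(f)\,\|\phi\|_1$ gives the required continuity.

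The substance is density. Given $f \in \sV_k^m$, I would approximate it in $\sD$ in two stages. First, pick $\chi \in \sD$ with $\chi = 1$ near the origin and set $\chi_n(u) := \chi(u/n)$. By the Leibniz rule,
$$
\del^\al\bigl((1-\chi_n)f\bigr)(u)
= \sum_{\bt\leq\al} \binom{\al}{\bt}\,
  \del^\bt(1 - \chi_n)(u)\,\del^{\al-\bt}f(u).
$$
For $\bt = 0$, weighting by $(1+u^2)^{-k-|\al|/2}$ produces a function dominated pointwise by the original weighted derivative $(1+u^2)^{-k-|\al|/2}|\del^\al f|$ restricted to $\set{|u| \geq cn}$, whose supremum tends to $0$ precisely because this weighted derivative vanishes at infinity, as $f \in \sV_k^m$. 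For $|\bt| \geq 1$, the function $\del^\bt\chi_n(u) = n^{-|\bt|}(\del^\bt\chi)(u/n)$ is supported in an annulus where $|u|$ is comparable to $n$, so $(1+u^2)^{-|\bt|/2} \leq C n^{-|\bt|}$ there; splitting the exponent as $k + |\al|/2 = (k + |\al-\bt|/2) + |\bt|/2$ and absorbing the second factor into the weight on $f$ yields
$$
(1+u^2)^{-k-|\al|/2}\,|\del^\bt\chi_n(u)|\,|\del^{\al-\bt}f(u)|
\leq C'\, n^{-2|\bt|}\,p_{k,m}(f),
$$
which tends to zero. Summing over $\bt\leq\al$ gives $p_{k,m}(\chi_n f - f) \to 0$.

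At this point $\chi_n f$ has compact support but only $C^m$ regularity, so the second stage is to convolve with a smooth approximate identity $\rho_\eps(u) := \eps^{-2N}\rho(u/\eps)$, $\rho \in \sD$ with $\int\rho = 1$. Then $\rho_\eps * (\chi_n f) \in \sD$; for $|\al|\leq m$ one has $\del^\al[\rho_\eps * (\chi_n f)] = \rho_\eps * \del^\al(\chi_n f)$, which converges uniformly on~$\bR^{2N}$ to $\del^\al(\chi_n f)$ as $\eps \to 0$, since $\del^\al(\chi_n f)$ is continuous of compact support, hence uniformly continuous. As all these functions are supported in a fixed compact set for $\eps$ small, the weight $(1+u^2)^{-k-|\al|/2}$ is bounded there and uniform convergence forces $p_{k,m}$-convergence. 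A diagonal choice $\eps = \eps_n \downarrow 0$ then produces approximants in~$\sD$ converging to~$f$. The one delicate step — the main obstacle — is the truncation estimate for $|\bt|\geq 1$, where the $n^{-|\bt|}$ gain from the shrinking derivatives of~$\chi_n$ must be matched by a further $n^{-|\bt|}$ extracted from the weight on the annulus $|u|\sim n$; this is exactly where the $|\al|/2$ growth of the weight exponent is essential, and without it density (hence normality) could fail.
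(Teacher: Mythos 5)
Your proof is correct and follows the same route as the paper's: truncation by a dilated cutoff (the paper's $g_\eps(u) = g(\eps u)$), a Leibniz estimate showing $p_{k,m}(f - fg_\eps) \to 0$, then mollification from $\sD^m$ to $\sD$. The one thing to correct is your closing remark that the $|\al|/2$ growth of the weight exponent ``is essential'' to density, and that ``without it density (hence normality) could fail.'' This is not so. Horv\'ath's space $\sS_{-k}^m$, with the flat weight $(1+u^2)^{-k}$ carrying no $|\al|$-dependence at all, also has $\sD$ dense, and the paper's own estimate for $\sV_k^m$ never invokes the $n^{-|\bt|}$ decay of the differentiated cutoff. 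Instead it bounds every derivative of $1 - g_\eps$ by a single constant $C$ (uniformly in $\eps < 1$), restricts the supremum to the support region $u^2 \geq \eps^{-2}$, and applies the monotonicity $(1+u^2)^{-k-|\al|/2} \leq (1+u^2)^{-k-|\bt|/2}$ for $\bt \leq \al$, so that \emph{every} term of the Leibniz sum --- including those where the cutoff is differentiated --- is dominated by a supremum of $(1+u^2)^{-k-|\bt|/2}|\del^\bt f|$ over $u^2 \geq \eps^{-2}$, which tends to $0$ exactly because $f \in \sV_k^m$. Your $n^{-2|\bt|}$ bound is a correct but unneeded sharpening; the $\bt = 0$ term is the one that genuinely exploits the vanishing-at-infinity hypothesis, and it cares nothing about how the exponent varies with $|\al|$. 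For the two inclusions, the paper simply interpolates $q_{k+m,m}(f) \leq p_{k,m}(f) \leq q_{k,m}(f)$ to get the continuous chain $\sD \subset \sS_{-k}^m \subset \sV_k^m \subset \sS_{-k-m}^m \subset \sD'$ from the known normality of the $\sS_{-k}^m$; your direct verification reaches the same conclusion.
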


\begin{proof}
We adapt the analogous proof of Horváth~\cite{Horvath66} for
$\sS_{-k}^m$. Take $g \in \sD$ with $g(u) = 1$ for $u^2 \leq 1$ and
$0 \leq g(u) \leq 1$ for all$ u \in \bR^{2N}$. Set
$g_\eps(u) := g(\eps u)$ for $\eps > 0$. Then for $f \in \sV_k^m$ we
have $fg_\eps \in \sD^m$ (the space of $C^m$ functions of compact
support) and from~\eqref{eq:Weyl-seminorms} we get
\begin{align*}
p_{k,m}(f - fg_\eps) 
&= \sup \biggl\{ (1 + u^2)^{-k-|\al|/2} \sum_{\bt\leq\al}
\binom{\al}{\bt} \,|\del^{\al-\bt}(1 - g(\eps u)) \,\del^\bt f(u)|
: |\al| \leq m, \ u \in \bR^{2N} \biggr\} 
\\
&\leq C \sup \biggl\{ (1 + u^2)^{-k-|\al|/2} \sum_{\bt\leq\al}
\binom{\al}{\bt} \,|\del^\bt f(u)|
: |\al| \leq m, \ u^2 \geq \eps^{-2} \biggr\}
\\
&\leq 2^m C \sup\set{(1 + u^2)^{-k-|\bt|/2} \,|\del^\bt f(u)|
: |\bt| \leq m, \ u^2 \geq \eps^{-2}},
\end{align*}
where we may take
$$
C = 1 + \sup\set{|\del^\ga g(u)| : |\ga| \leq m,\ u \in \bR^{2N}}.
$$
Thus $fg_\eps \to f$ in $\sV_k^m$ as $\eps \to 0$, and so $\sD$ is
dense in $\sV_k^m$. Hence $\sD$ is dense in $\sV_k^m$ since it is
dense in $\sD^m$. On the other hand, since
$q_{k+m,m}(f) \leq p_{k,m}(f) \leq q_{k,m}(f)$ for
$f \in \bC^m(\bR^{2N})$, we get a chain of continuous inclusions:
$$
\sD \subset \sS_{-k}^m \subset \sV_k^m \subset \sS_{-k-m}^m 
\subset \sD'.
\eqno \qed
$$
\hideqed
\end{proof}

\begin{lema} 
\label{lm:many-normal-spaces}
Let $(\sR_k)_{k\in\bN}$ be a sequence of normal spaces of
distributions. 
\begin{enumerate}
\item[\textup{(1)}]
If $\sR_{k+1} \subset \sR_k$ with a continuous inclusion for all~$k$,
and if $\sR := \bigcap_{k\in\bN} \sR_k$ with the projective topology;
\item[\textup{(2)}]
or if $\sR_k \subset \sR_{k+1}$ with a continuous inclusion for
all~$k$, and if $\sR := \bigcup_{k\in\bN} \sR_k$ with the inductive
topology;
\end{enumerate} 
then $\sR$ is a normal space of distributions.
\end{lema}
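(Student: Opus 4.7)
To show that $\sR$ is a normal space of distributions we must verify three things: the continuous inclusion $\sD \subset \sR$, the continuous inclusion $\sR \subset \sD'$, and the density of $\sD$ in~$\sR$. In both cases (1) and (2) the two inclusion statements are essentially formal consequences of the universal properties of the projective and inductive topologies; the only delicate point is density in case~(1).

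\textbf{Case (2), inductive union.} Since each $\sR_k \subset \sR_{k+1}$ continuously, each $\sR_k$ embeds continuously into~$\sR$. Composing with $\sD \hookrightarrow \sR_0$ gives $\sD \subset \sR$ continuously. For the other inclusion, each map $\sR_k \to \sD'$ is continuous, and by the universal property of the inductive topology the map $\sR \to \sD'$ is continuous as well (in particular $\sR$ is Hausdorff). For density, any $f \in \sR$ lies in some $\sR_k$; approximate $f$ by a net in $\sD$ convergent in~$\sR_k$, and continuity of $\sR_k \hookrightarrow \sR$ transfers the convergence to~$\sR$.

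\textbf{Case (1), projective intersection.} The continuous inclusion $\sD \subset \sR$ follows from the universal property of the projective topology, since each map $\sD \to \sR_k$ is continuous. The continuous inclusion $\sR \subset \sD'$ follows from continuity of $\sR \to \sR_0$ (a defining property of the projective topology) composed with $\sR_0 \to \sD'$.

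The main obstacle is density, since in an arbitrary projective limit density in each factor need not imply density in the intersection. Here, however, we exploit the nesting $\sR_{k+1} \subset \sR_k$. A basic neighbourhood of $f \in \sR$ in the projective topology has the form
$$
U = \sR \cap \bigcap_{k \in F} U_k,
$$
with $F \subset \bN$ finite and each $U_k$ open in~$\sR_k$. Let $k^* := \max F$. Because $\sR_{k^*} \hookrightarrow \sR_k$ is continuous for every $k \leq k^*$, each $U_k \cap \sR_{k^*}$ is open in $\sR_{k^*}$, and therefore
$$
V := \bigcap_{k \in F} \bigl( U_k \cap \sR_{k^*} \bigr)
$$
is a neighbourhood of $f$ in~$\sR_{k^*}$. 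Since $\sD$ is dense in $\sR_{k^*}$, we can pick $\varphi \in \sD \cap V$; this $\varphi$ then lies in $U$, establishing density.
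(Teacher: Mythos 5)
Your proof is correct and takes essentially the same approach as the paper's. In case (1) you spell out why, thanks to the nesting $\sR_{k+1}\subset\sR_k$, a basic projective-topology neighbourhood reduces to a single neighbourhood in one $\sR_{k^*}$ — a reduction the paper merely asserts — but otherwise the two arguments coincide.
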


\begin{proof}
(1)~We have $\sD \subset \sR \subset \sR_k \subset \sD'$ for all 
$k \in \bN$. The first inclusion is continuous since $\sR$ has the
projective topology and each $\sD \subset \sR_k$ is continuous; the
continuity of the other inclusions is clear. If $V$ is a neighbourhood
of~$0$ in $\sR$ and if $f \in \sR$, then $V = V_k \cap \sR$ where $V_k$
is a $0$-neighbourhood in some $\sR_k$; then $f + V_k$ contains some
$g \in \sD$, and hence $g \in (f + V)$: so $\sD$ is dense in~$\sR$.

(2)~We have $\sD \subset \sR_k \subset \sR \subset \sD'$ for all
$k \in \bN$. The third inclusion is continuous since $\sR$ has the
inductive topology and each $\sR_k \subset \sD'$ is continuous; the
continuity of the other inclusions is clear. If $V$ is a neighbourhood
of~$0$ in $\sR$ and if $f \in \sR$, then $f \in \sR_k$ for some $k$ and
$V \cap \sR_k$ is a $0$-neighbourhood in some $\sR_k$; then
$f + (V \cap \sR_k)$ contains some $g \in \sD$, and hence
$g \in (f - V)$: so $\sD$ is dense in~$\sR$.
\end{proof}

Already in~\cite{Horvath66}, $\sO_C$ has been shown to be a normal
space of distributions, where the proof technique is essentially the
application of Lemma~\ref{lm:many-normal-spaces} to the definition
\eqref{eq:OC-filtration} of~$\sO_C$. From~\eqref{eq:OC-filtration} we
also obtain the normality of $\sO_M$. Combining the two Lemmas with
the definition~\eqref{eq:OT-filtration} of $\sO_T$, we get normality
of~$\sO_T$. Thus each inclusion in the chain
$$
\sD \subset \sO_C \subset \sO_T \subset \sO_M \subset \sD'
$$
is continuous and has dense image (since $\sD$ is dense in all these
spaces). Therefore the transposed maps
$$ 
\sD \subset \sO_M' \subset \sO'_T \subset \sO_C' \subset \sD'
$$
are one-to-one and continuous. We identify each dual space with its image in 
$\sD'$. Since we can interpolate $\sS_2$ and $\sS_2'$ into both chains, the 
dual spaces consist of tempered distributions.	

The space of distributions of compact support on $\bR^{2N}$
\cite{Schwartz66} is the dual space $\sE_2'$ of~$\sE_2$. We can now
show that it is contained in the Moyal algebra.

\begin{thm} 
\label{th:regular-subspaces}
The spaces $\sE_2'$, $\sO_M'$ and $\sO'_T$ are contained in~$\sM$.
\end{thm}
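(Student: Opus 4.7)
The plan is to reduce the theorem to proving the single inclusion $\sO'_T \subset \sM$. The continuous dense inclusions $\sO_T \subset \sO_M \subset \sE_2$ (which hold because $\sD$ is dense in each of these normal spaces of distributions) give, by transposition, continuous injections $\sE_2' \subset \sO_M' \subset \sO'_T$ within~$\sS_2'$, so only the largest case needs proof. Fixing $T \in \sO'_T$ and $f \in \sS_2$, I need to show $T \x f \in \sS_2$ and $f \x T \in \sS_2$; the second will follow from the first applied to $T^*$ and $f^*$ via $(f \x T)^* = T^* \x f^*$, after observing that $\sO'_T$ is closed under the $*$-operation (because $\sO_T$ is).

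For the $T \x f$ case, I will use the pointwise representation $(T \x g)(u) = \duo{T}{\eps_u \tau_u F g}$ from~\eqref{eq:product-formula} and establish the following intermediate claim: for every $S \in \sO'_T$ and every $g \in \sS_2$, the function $S \x g$ is rapidly decreasing. Granting this, the extended Leibniz formula~\eqref{eq:Leibniz-rules} expresses $\mu^\al \del^\ga (T \x f)$ as a finite sum of terms $S' \x g'$ with $S' = \mu^{\al-\bt}\del^{\ga-\eps} T$ and $g' = \delhat^\bt \del^\eps f$. A direct check on the seminorms~\eqref{eq:Weyl-seminorms} shows that $\mu_j$ and $\del_j$ preserve $\sO_T$, hence preserve $\sO'_T$ by transposition, so each $S'$ lies in $\sO'_T$ and each $g'$ in~$\sS_2$; rapid decay of every such $S' \x g'$ then forces $\mu^\al\del^\ga(T\x f)$ to be bounded for all $\al,\ga$, i.e.\ $T \x f \in \sS_2$.

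The key technical estimate is
$$
p_{k,m}(\eps_u \tau_u F g) \leq K_{k,m}(g)\,(1 + u^2)^{-k}
\word{for all}k,m\in\bN,\ g\in\sS_2.
$$
One expands $\del^\al(\eps_u \tau_u Fg)(x)$ by the Leibniz rule, noting that each derivative of the twist $\eps_u(x) = e^{iu'Jx}$ produces a polynomial factor in~$u$ of degree at most $|\al-\bt|$ multiplying $(\del^\bt Fg)(x-u)$. Splitting the supremum over~$x$ into the regions $\{|x-u| \leq |u|/2\}$ (where the weight $(1+x^2)^{-k-|\al|/2}$ is $O(|u|^{-2k-|\al|})$ and absorbs the polynomial in~$u$) and $\{|x-u| > |u|/2\}$ (where the Schwartz decay of $\del^\bt Fg$ crushes any polynomial growth) delivers the stated bound. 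Since $\sO_T = \bigcup_k \sV_k$ carries the inductive-limit topology, continuity of~$T$ provides, for each~$k$, an estimate $|\duo{T}{h}| \leq C_k\,p_{k,m_k}(h)$ on $\sV_k$; applied to $h = \eps_u \tau_u Fg$ this yields $|(T \x g)(u)| \leq C_k K_{k,m_k}(g)\,(1+u^2)^{-k}$ for every~$k$, which is exactly rapid decay.

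The main obstacle I anticipate is the interplay between the polynomial-in-$u$ factors produced by differentiating the twist~$\eps_u$ and the precisely balanced weights $(1+u^2)^{-k-|\al|/2}$ in the $\sV^m_k$-seminorms. The linear-in-$|\al|$ shift of the exponent in these weights is exactly what accommodates the extra $u$-factors; this is why $\sO_T$, not the simpler $\sO_C$ (whose weights are $|\al|$-independent), is the correct target space in Theorem~\ref{th:Moyal-regularity}, and why duality against $\sO'_T$ is what produces Schwartz outputs.
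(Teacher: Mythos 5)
Your proposal is correct, but it takes a genuinely different route from the paper. You and the paper both start with the reduction, via transposing $\sO_T \subset \sO_M \subset \sE_2$, to the single inclusion $\sO'_T \subset \sM$. But from there the paper argues \emph{softly}: it invokes Theorem~\ref{th:Moyal-regularity} to get that, for $h \in \sS_2$ fixed, the map $T \mapsto T \x h$ is continuous $\sS_2' \to \sO_T$, then transposes this single continuous linear map, using reflexivity of $\sS_2$ (so that $(\sS_2')' = \sS_2$), to obtain directly a continuous map $\sO'_T \to \sS_2$ which it identifies as $S \mapsto h \x S$; a short consistency check with~\eqref{eq:Moyal-bimodule} then finishes. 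You, by contrast, argue \emph{by hand}: you re-establish the decisive decay estimate
$$p_{k,m}(\eps_u\tau_u Fg) \leq K_{k,m}(g)\,(1+u^2)^{-k}$$
on the twisted translates themselves, feed it through the continuity of $T$ on the inductive limit $\sO_T = \bigcup_k \sV_k$ to get rapid decay of $(T \x g)(u) = \duo{T}{\eps_u\tau_u Fg}$, and then use the Leibniz rules~\eqref{eq:Leibniz-rules} together with the stability of $\sO'_T$ under $\mu_j,\del_j$ to upgrade boundedness to membership in $\sS_2$; the $f \x T$ half follows by the $*$-symmetry. Both are valid. The paper's transposition argument is shorter once Theorem~\ref{th:Moyal-regularity} is in hand, and automatically packages the consistency statements~\eqref{eq:more-products}--\eqref{eq:more-extensions} for the $\sM$-bimodule structure. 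Your direct argument essentially re-proves, in dual form and with explicit constants, the content that the paper obtains by citing Theorem~\ref{th:Moyal-regularity} and transposing; its real merit is that it shows concretely \emph{why} the $|\al|$-dependent weights $(1+u^2)^{-k-|\al|/2}$ in~\eqref{eq:Weyl-seminorms} are exactly what is needed to absorb the polynomial-in-$u$ factors coming from differentiating the twist $\eps_u$ --- i.e.\ why $\sO_T$, rather than $\sO_C$ or $\sO_M$, is the sharp regularity space, which the paper only mentions informally.

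One small point worth making explicit in a write-up: your claim that $\mu_j$ and $\del_j$ preserve $\sO_T$ is not quite immediate from the seminorms~\eqref{eq:Weyl-seminorms}; what holds is $\del_j\colon\sV_{k-1}\to\sV_k$ and $\mu_j\colon\sV_k\to\sV_{k+1}$ (the paper itself records the first of these inside the proof of Theorem~\ref{th:Moyal-regularity}), and hence both are continuous on the inductive limit $\sO_T$; the transposes then preserve $\sO'_T$ as you assert. This detail is needed to justify that each $S' = \mu^{\al-\bt}\del^{\ga-\eps}T$ in your Leibniz expansion indeed lies in $\sO'_T$.
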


\begin{proof}
Transposing $\sO_T \subset \sO_M \subset \sE_2$, we get
$\sE_2' \subset \sO_M' \subset \sO'_T$, so we need only check that
$\sO'_T \subset \sM$.

If $S \in \sO'_T$ and $T \in \sS_2'$, we may define $S \x T$, $T \x S$
by transposition:
\begin{equation}
\duo{S \x T}{h} := \duo{S}{T \x h}, \qquad
\duo{T \x S}{h} := \duo{S}{h \x T},
\label{eq:more-products} 
\end{equation}
for $h \in \sS_2$; since the right hand sides are continuous in $h$,
by Theorem~\ref{th:Moyal-regularity}, $S \x T$ and $T \x S$ are
defined in~$\sS_2'$.

Moreover, for a fixed $h \in \sS_2$, the maps $T \mapsto T \x h$,
$T \mapsto h \x T$ are continuous $\:\sS_2'\to\sO_T$ by 
Theorem~\ref{th:Moyal-regularity}, so they transpose to continuous
maps $S \mapsto h \x S$, $S \mapsto S \x h$ from $\sO'_T$ into
$(\sS_2')' = \sS_2$, via
\begin{equation}
\duo{h \x S}{T} := \duo{S}{T \x h}, \qquad
\duo{S \x h}{T} := \duo{S}{h \x T}.
\label{eq:more-transposes} 
\end{equation}
Combining \eqref{eq:more-products} and~\eqref{eq:more-transposes}, we
get
\begin{equation}
\duo{T \x S}{h} := \duo{T}{S \x h}, \qquad
\duo{S \x T}{h} := \duo{T}{h \x S},
\label{eq:more-extensions} 
\end{equation}
for $T \in \sS_2'$, $S \in \sO'_T$, $h \in \sS_2$. We have shown that
$S \x h \in \sS_2$, $h \x S \in \sS_2$ whenever $h \in \sS_2$,
$S \in \sO'_T$; these are resultants of separately continuous
extensions to $\sO'_T \x \sS_2$ of the twisted product on $\sS_2$, and
since $\sS_2$ is dense in~$\sO'_T$ these extensions are unique. Since
\eqref{eq:more-extensions} is formally identical
with~\eqref{eq:Moyal-bimodule}, the twisted products
\eqref{eq:more-products} and~\eqref{eq:more-transposes} are consistent
with previous ones, and we conclude that $\sO'_T \subset \sM$. Since
$\one \notin \sO'_T$, we have $\sO'_T \neq \sM$.
\end{proof}

\begin{corl} 
\label{cr:regular-subspaces}
The space $\sO_C$ is contained in $\sM$. In particular, the (ordinary)
convolution of any function in $\sS_2$ with any tempered distribution
belongs to~$\sM$.
\end{corl}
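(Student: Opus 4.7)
My plan is to combine Theorem~\ref{th:regular-subspaces} (which gives $\sO'_M \subset \sM$) with the Fourier-invariance of $\sM$ established just after Definition~\ref{df:Moyal-bimodule}, by identifying $\sO_C$ as the symplectic-Fourier image of $\sO'_M$. Concretely, I would first check that $F(\sO_C) \subset \sO'_M$. For $f \in \sO_C$, the tempered distribution $Ff$ is initially defined on $\sS_2$ via $\duo{Ff}{g} := \duo{f}{Fg}$; I would extend it to any $g \in \sO_M$ by invoking the classical Schwartz--Horv\'ath result that $F$ is a topological isomorphism $\sO_M \to \sO'_C$ (the Fourier counterpart to the interchange of multiplication and convolution on~$\sS_2$). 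For such $g$, $Fg$ lies in $\sO'_C$, and $\duo{f}{Fg}$ becomes the canonical $\sO_C$--$\sO'_C$ duality pairing, which is continuous in~$g$; hence $Ff$ extends continuously to $\sO_M$, i.e.\ $Ff \in \sO'_M$.

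From $F^2 = \mathrm{Id}$ and $F(\sM) = \sM$ I then conclude
$$
\sO_C = F\bigl(F(\sO_C)\bigr) \subset F(\sO'_M) \subset F(\sM) = \sM.
$$
The ``in particular'' clause is immediate: as noted in the opening of Section~\ref{sec:Moyal-regularity}, the ordinary convolution $T * h$ of any $T \in \sS_2'$ with any $h \in \sS_2$ is a smooth function in~$\sO_C$, and so belongs to~$\sM$.

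The main obstacle is the appeal to the external Schwartz--Horv\'ath identification $F \colon \sO_M \to \sO'_C$, which is not proved internally in this paper. A self-contained alternative proves $f \x g \in \sS_2$ for $f \in \sO_C$, $g \in \sS_2$ by direct estimate: Theorem~\ref{th:product-formulas}, after the change of variable $s = u + t$ and using $u'Ju = 0$, gives
$$
(f \x g)(u) = \int f(u+t)\, e^{iu'Jt}\,(Fg)(t)\,dt,
$$
and repeated integration by parts in~$t$ against the oscillation $e^{iu'Jt}$ produces arbitrarily high inverse powers of $|Ju| = |u|$; because $f \in \sO_C$ forces \emph{all} its derivatives to share one polynomial bound in~$u$, rapid decay of $f \x g$ follows. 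The Leibniz rules~\eqref{eq:Leibniz-rules} then upgrade this to $\mu^\al\del^\ga(f \x g) \in L^\infty$ for all $\al, \ga$, placing $f \x g$ in $\sS_2$, and the $*$-involution yields $f \in \sM_R$ as well.
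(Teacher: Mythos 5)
Your main argument is essentially the same as the paper's: both invoke the classical Schwartz--Horv\'ath duality that the Fourier transform carries $\sO_M'$ onto $\sO_C$, then apply Theorem~\ref{th:regular-subspaces} ($\sO_M' \subset \sM$) and the Fourier-invariance of~$\sM$. You use the symplectic transform $F$ with $F^2 = \mathrm{Id}$ where the paper uses $\sF$ and cites reflexivity of~$\sO_C$; this is immaterial. You also spell out the duality extension more carefully than the paper's rather telegraphic ``it is enough to note that $\sF(\sO_M') = \sO_C$,'' and you handle the ``in particular'' clause exactly as the paper does, by citing the classical fact $\sS_2' * \sS_2 \subset \sO_C$.

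Your proposed self-contained alternative is a genuinely different route, and it is sound. The key observation you make is precisely the right one: the defining feature of $\sO_C$, as opposed to~$\sO_M$, is that a \emph{single} polynomial degree $K$ bounds \emph{all} derivatives $\del^\bt f$. So after $k$ integrations by parts of
$$
(f \x g)(u) = \int f(u+t)\,e^{iu'Jt}\,(Fg)(t)\,dt
$$
against the nondegenerate phase (with $|\nabla_t(u'Jt)| = |u|$), the Peetre inequality gives $|(f\x g)(u)| \leq C_k |u|^{-k}(1+u^2)^K$ for $|u|\geq 1$, and taking $k$ arbitrarily large yields rapid decay -- something that would fail for general $f \in \sO_M$, where $K$ could grow with the order of derivative. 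The upgrade via~\eqref{eq:Leibniz-rules} also works, since $\mu^{\al-\bt}\del^{\ga-\eps}f$ remains in $\sO_C$ (with a possibly larger degree, which is harmless), so each term in the Leibniz expansion is again $(\sO_C)\x\sS_2$ and rapidly decreasing. The involution then gives $f \in \sM_R$ as you say. This alternative buys independence from the external reference at the cost of a direct stationary-phase estimate; the paper's route is shorter but opaque to a reader unfamiliar with the $\sO_C$--$\sO_M'$ Fourier duality.
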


\begin{proof}
Since $\sO_C$ is reflexive~\cite{Grothendieck55}, it is enough to note
that $\sF(\sO_M') = \sO_C$. Then use Theorem~4. Also~\cite{Horvath66},
$T * f$ belongs to $\sO_C$ if $f \in \sS_2$, $T \in \sS_2'$.
\end{proof}

If $R,S \in \sS_2'$, their tensor product $R \ox S \in \sS'(\bR^{4N})$
is given by $\duo{R \ox S}{f \ox g} := \duo{R}{f}\,\duo{S}{g}$. If 
$R,S \in \sM$, we have
\begin{align*}
\duo{R \4 S}{h} &= \duo{R}{\check S \4 h} 
= \duo{R}{u \mapsto \duo{S}{\eps_{-u} \tau_{-u} h}}
\\
&= \duo{R \ox S}{(u,v) \mapsto e^{-iu'Jv} h(u + v)}.
\end{align*}
Writing $h_2(u,v) := e^{-iu'Jv} h(u + v)$, we find that
\begin{equation}
\del_u^\al \del_v^\ga h_2(u,v)
= \sum_{\bt\leq\al} \sum_{\eps\leq\ga} i^{|\eps|-|\bt|} 
\binom{\al}{\bt} \binom{\ga}{\eps} (Jv)^\bt (Ju)^\eps \,e^{-iu'Jv}
(\del_u^{\al-\bt} \del_v^{\ga-\eps} h)(u + v).
\label{eq:deriv-expansion} 
\end{equation}
Thus $h \mapsto h_2$ is continuous as a map $\: \sE_2 \to \sE(\bR^{4N})$. Since
$(R,S) \mapsto R \ox S$ is jointly continuous
$\: \sE'_2 \x \sE'_2 \to \sE'(\bR^{4N})$ and
$\duo{R \4 S}{h} = \duo{R \ox S}{h_2}$, we find that
$(R,S) \mapsto R \4 S$ is jointly continuous on~$\sE'_2$.

By the Paley--Wiener theorem, $F(\sE'_2) = \sF(\sE'_2) =: \sOexp$ is
the space of functions in $\sO_M$ which extend to analytic functions
of exponential type, and by Theorem~\ref{th:regular-subspaces} and the
Fourier-invariance of $\sM$, $\sOexp$ is contained in~$\sM$.
($\sOexp$ carries the topology induced by $\sF$ from~$\sE'_2$.)

For convenience, we write $\hat\mu_j f := \mu_{j+N}f$ if $j \leq N$,
$\hat\mu_j f := -\mu_{j-N}f$ if $j > N$. If $h \in \sE_2$, the
expansion
$$
e^{-iu'Jv} h(u + v) 
= \sum_{k=0}^\infty \frac{(-i)^k}{k!}\, (u'Jv)^k h(u + v)
$$
converges uniformly on compact subsets of $\bR^{2N}$, together with
all derivatives on account of~\eqref{eq:deriv-expansion}, and this
convergence is uniform on bounded subsets of~$\sE_2$. Thus
$$
\duo{S \4 T}{h} = \sum_{k=0}^\infty \frac{(-i)^k}{k!}\,
\duo{S \ox T}{(u,v) \mapsto (u'Jv)^k h(u + v)}.
$$
Since $(u'Jv)^k 
= \sum_{|\al|=k} \frac{k!}{\al!}\, u^{\al\prime} (Jv)^\al$, and
$(\hat\mu_j f)(v) = (Jv)_j f(v)$ for all $f \in \sS_2$ and each~$j$,
we derive
\begin{align}
\duo{S \4 T}{h}
&= \sum_{k=0}^\infty \sum_{|\al|=k} \frac{(-i)^k}{\al!}\,
\duo{S \ox T}{(u,v) \mapsto u^{\al\prime} (Jv)^\al h(u + v)}
\nonumber \\
&= \sum_{\al\in\bN^{2N}} \frac{(-i)^{|\al|}}{\al!}\,
\duo{\mu^\al S \ox \hat\mu^\al T}{(u,v) \mapsto h(u + v)}
\nonumber \\
&= \sum_{\al\in\bN^{2N}} \frac{(-i)^{|\al|}}{\al!}\,
\duo{\mu^\al S * \hat\mu^\al T}{h} 
\label{eq:convol-expansion} 
\end{align}
where the series converges uniformly for $h$ in bounded subsets of
$\sE_2$. We are now able to expand the twisted product as a series of
products of derivatives.

\begin{thm} 
\label{th:Moyal-expansion}
If $S,T \in \sOexp$, then 
$$
S \x T = \sum_{\al\in\bN^{2N}} \frac{i^{|\al|}}{\al!}
\,(\del^\al S)\,(\delhat^\al T)
$$
with convergence in the topology of~$\sOexp$.
\end{thm}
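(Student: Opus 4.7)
The plan is to transfer the identity to $\sE_2'$ via Fourier transform and then apply (a reordered version of) the convolution expansion \eqref{eq:convol-expansion}. Since $\sOexp = F(\sE_2')$ with topology induced by $F$, I would write $S = FA$ and $T = FB$ for unique $A, B \in \sE_2'$. Combining $F^2 = \text{Id}$ with the relation $F(f \x g) = Ff \4 Fg$ (the $F$-analogue of \eqref{eq:Fourier-transfer}, which extends to $\sM$ as explained in Section~3) yields $S \x T = F(A \4 B)$.

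Next I would rerun the derivation leading to~\eqref{eq:convol-expansion}, but this time rewriting the exponent as $e^{-iu'Jv} = e^{iv'Ju}$ before using $(v'Ju)^k = \sum_{|\al|=k} (k!/\al!)\, v^\al (Ju)^\al$. This yields the reordered expansion
\begin{equation*}
A \4 B = \sum_{\al \in \bN^{2N}} \frac{i^{|\al|}}{\al!}\,(\hat\mu^\al A) * (\mu^\al B),
\end{equation*}
with convergence uniformly on bounded subsets of~$\sE_2$. Applying $F$ term-wise (since $F \: \sE_2' \to \sOexp$ is a topological isomorphism) and using the Fourier convolution theorem $F(f * g) = Ff \cdot Fg$ (a direct calculation from definitions, with no $2\pi$ prefactor thanks to the normalization $dx = (2\pi)^{-N} d^{2N}x$) together with the intertwinings $F(\mu_j f) = i \delhat_j Ff$ (stated in Section~2) and $F(\hat\mu_j f) = -i \del_j Ff$ (an analogous integration by parts), each term transforms as
\begin{equation*}
F\bigl((\hat\mu^\al A) * (\mu^\al B)\bigr) = (-i)^{|\al|}(\del^\al S) \cdot i^{|\al|}(\delhat^\al T) = (\del^\al S)(\delhat^\al T),
\end{equation*}
since $i^{|\al|} \cdot (-i)^{|\al|} = 1$. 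Summing gives the stated series, convergent in~$\sOexp$.

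The main conceptual step is the reordering in the second paragraph: pairing $\hat\mu$ with the first factor and $\mu$ with the second is what makes $F$ produce the target form directly. Without this, one obtains the equivalent expression $\sum_\al ((-i)^{|\al|}/\al!)(\delhat^\al S)(\del^\al T)$, which must then be re-indexed using the symplectic identity $\sum_j \del_j \ox \delhat_j = -\sum_j \delhat_j \ox \del_j$ to reach the stated form.
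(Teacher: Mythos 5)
Your proposal is correct and follows the same basic strategy as the paper: Fourier-transform the convolution expansion \eqref{eq:convol-expansion} to obtain the twisted-product expansion. The only real difference is the choice of transform. The paper substitutes $S \mapsto \sF^{-1}S$, $T \mapsto \sF^{-1}T$ into \eqref{eq:convol-expansion} and applies the \emph{ordinary} Fourier transform $\sF$, for which both intertwinings carry the \emph{same} sign — $\sF(\mu^\al f) = i^{|\al|}\del^\al \sF f$ and $\sF(\hat\mu^\al f) = i^{|\al|}\delhat^\al \sF f$ — so the target form $\sum_\al \frac{i^{|\al|}}{\al!}(\del^\al S)(\delhat^\al T)$ appears directly, with no re-indexing. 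You instead use the symplectic transform $F$, for which the two intertwinings have opposite signs ($F(\mu_j f)=i\delhat_j Ff$ while $F(\hat\mu_j f)=-i\del_j Ff$), and you compensate by re-deriving a reordered variant of \eqref{eq:convol-expansion} using $e^{-iu'Jv}=e^{iv'Ju}$. Both routes are valid; the paper's choice of $\sF$ is marginally cleaner in that it reuses \eqref{eq:convol-expansion} verbatim, whereas your route requires rerunning the derivation of that expansion (including its convergence argument) and verifying $F(\hat\mu_j f) = -i\del_j Ff$, which the paper's formula table does not list. Your closing observation — that proceeding with the original ordering and $F$ gives $\sum_\al \frac{(-i)^{|\al|}}{\al!}(\delhat^\al S)(\del^\al T)$, equal to the stated series via $\sum_j \del_j \ox \delhat_j = -\sum_j \delhat_j \ox \del_j$ — is also correct and explains precisely why the sign bookkeeping differs between the two transforms.
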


\begin{proof}
We apply the Fourier transform $\sF$ to \eqref{eq:convol-expansion},
replacing $S,T$ by $\sF^{-1}S,\sF^{-1}T$. By the continuity of~$\sF$
we get
\begin{align*}
S \x T &= \sum_\al \frac{(-i)^{|\al|}}{\al!}\,
\sF(\mu^\al \sF^{-1}S * \hat\mu^\al \sF^{-1}T)
\\
&= \sum_\al \frac{(-i)^{|\al|}}{\al!}\, i^{2|\al|}
\,(\del^\al S)(\delhat^\al T)  
= \sum_\al \frac{i^{|\al|}}{\al!} \,(\del^\al S)\,(\delhat^\al T). 
\tag*{\qed}
\end{align*}
\hideqed
\end{proof}

\begin{corl} 
\label{cr:Moyal-bracket}
If $S,T \in \sOexp$, then
\begin{equation}
S \x T - T \x S = 2i \sum_{r=0}^\infty \sum_{|\al|=2r+1} 
\frac{(-1)^r}{\al!} \,(\del^\al S)\,(\delhat^\al T).
\label{eq:Moyal-bracket} 
\end{equation}
\end{corl}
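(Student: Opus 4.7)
The plan is to apply Theorem~\ref{th:Moyal-expansion} separately to $S \x T$ and to $T \x S$ and then combine the two expansions term by term. The obstacle is that the expansion of $T \x S$ contains $(\del^\al T)(\delhat^\al S)$, with the roles of $S$ and $T$ exchanged, so I must first rewrite these terms in the form $(\del^\al S)(\delhat^\al T)$ before I can subtract.

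The key is that $\delhat$ is just a signed relabelling of $\del$. Writing $\al = (\al',\al'')$ with $\al' = (\al_1,\dots,\al_N)$ and $\al'' = (\al_{N+1},\dots,\al_{2N})$, and putting $\tilde\al := (\al'',\al')$, the definition of $\delhat_j$ gives $\delhat^\al = (-1)^{|\al''|}\del^{\tilde\al}$. Applying this twice, once to $\delhat^\al S$ and once through the identity $\del^\al T = (-1)^{|\al'|}\delhat^{\tilde\al}T$, yields the pointwise relation
$$
(\del^\al T)(\delhat^\al S) = (-1)^{|\al|}\,(\del^{\tilde\al} S)(\delhat^{\tilde\al} T).
$$
Since $\al \mapsto \tilde\al$ is an involution of $\bN^{2N}$ preserving both $|\al|$ and $\al!$, re-indexing by $\beta := \tilde\al$ in the expansion of $T \x S$ supplied by Theorem~\ref{th:Moyal-expansion} gives
$$
T \x S = \sum_{\beta\in\bN^{2N}} \frac{(-i)^{|\beta|}}{\beta!}\,(\del^\beta S)(\delhat^\beta T),
$$
and convergence in $\sOexp$ is inherited, since the re-indexing is just a permutation of the summands of a series already known to converge.

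Subtracting this from the expansion of $S \x T$, each term $(\del^\al S)(\delhat^\al T)$ carries the coefficient $\bigl(i^{|\al|} - (-i)^{|\al|}\bigr)/\al!$, which vanishes for $|\al|$ even and reduces to $2i(-1)^r/\al!$ when $|\al| = 2r+1$. Reorganizing the resulting sum by total order $|\al| = 2r+1$ recovers exactly~\eqref{eq:Moyal-bracket}. The only non-routine step is verifying the sign in the swap identity; everything else is bookkeeping, and the term-by-term subtraction is justified by the fact that $\sOexp$ is a topological vector space in which both expansions of Theorem~\ref{th:Moyal-expansion} already converge.
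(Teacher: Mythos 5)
Your proof is correct and is the deduction the paper clearly has in mind (the paper states the corollary without proof, and the remark that follows it alludes to precisely this cancellation of even-order terms via the expansion of $\sin$). The swap identity $(\del^\al T)(\delhat^\al S) = (-1)^{|\al|}(\del^{\tilde\al} S)(\delhat^{\tilde\al} T)$ is verified correctly, the re-indexing by the involution $\al \mapsto \tilde\al$ is harmless since it preserves $|\al|$ and $\al!$ and hence only permutes terms within each finite homogeneous block $\set{|\al| = k}$, and the coefficient computation $i^{|\al|} - (-i)^{|\al|} = 2i(-1)^r$ for $|\al| = 2r+1$, vanishing for $|\al|$ even, gives exactly~\eqref{eq:Moyal-bracket}.
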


\begin{remk} 
The restriction to $\sE'_2$ or $\sOexp$ is only needed to guarantee
convergence of the series indexed by~$\al$. If either $S$ or~$T$ is a
polynomial, these series are finite sums, and the expansions are
valid.
\end{remk}

\begin{remk} 
For $N = 1$, the leading term in~\eqref{eq:Moyal-bracket} is the
ordinary Poisson bracket $\del_q S \del_p T - \del_p S \del_q T$. As a
differential operator on $S \ox T$, we may formally write
\begin{align*}
S \x T - T \x S 
&= 2i \biggl( \sum_{r=0}^\infty \frac{(-1)^r}{(2r+1)!}\,
(\del_q \ox \del_p - \del_p \ox \del_q)^{2r+1} \biggr) (S \ox T)
\\ 
&= 2i \bigl( \sin(\del_q \ox \del_p - \del_p \ox \del_q) \bigr) 
(S \ox T),
\end{align*}
an expression first derived by Moyal~\cite{Moyal49} and called the
``Moyal bracket''.
\end{remk}

\section{The matricial form of the twisted product} 
\label{sec:matrix-basis}

By working in the Schwartz space $\sS_2$, we avoid the usual
continuity problems for the creation and annihilation operators for
the harmonic oscillator, as has been observed
before~\cite{KristensenMP65}. In the present context, these operators
are represented by first-degree polynomials in $\sM$. To avoid
notational clutter, we will take $N = 1$ in this section and the next;
but the results go through in the general case with the systematic use
of multi-indices. We write $u = (q,p)$ and use $q,p,\del_q,\del_p$ in
place of $\mu_1,\mu_2,\del_1,\del_2$ respectively. We introduce the
notations:
\begin{align*}
a &:= \frac{q + ip}{\sqrt{2}}, \quad
\abar := \frac{q - ip}{\sqrt{2}}\,,
\\
\pd{}{a} &:= \frac{\del_q - i\del_p}{\sqrt{2}}, \quad 
\pd{}{\abar} := \frac{\del_q + i\del_p}{\sqrt{2}}\,,
\\
H &:= a\abar = \half(q^2 + p^2) = \half u^2, \quad
f_0 = 2 e^{-\abar a} = 2 e^{-H}.
\end{align*}

{}From~\eqref{eq:Weyl-operators} we obtain the formulas
\begin{alignat}{2}
a \x f &= af + \pd{f}{\abar}\,,
&\qquad f \x a &= af - \pd{f}{\abar}\,,
\nonumber \\
\abar \x f &= \abar f - \pd{f}{a}\,,
&\qquad f \x \abar &= \abar f + \pd{f}{a}\,.
\label{eq:one-step} 
\end{alignat}

We get at once the following equalities in $\sM$:
\begin{equation}
\abar \x a = H - 1, \quad a \x \abar = H + 1; \quad
a \x \abar - \abar \x a = 2.
\label{eq:two-step} 
\end{equation}

The third equality is of course the canonical commutation relation for
$a$ and~$\abar$: recall that we have taken units in which $\hbar = 2$.
We note also that $a \x a \xyx a \text{ ($n$~times) } = a^n$.

The Gaussian function $f_0$ has several nice properties: it is
(pointwise) positive, it is a fixed point for the various Fourier
transforms, and it is an idempotent in~$\sS_2$ for both the twisted
product and the twisted convolution. Moreover, it is a unit vector in
$L^2(\bR^2)$, because of our choices of normalization.

Notice that if $g \in \sO_M$, \eqref{eq:one-step} implies that
\begin{equation}
a \x (gf_0) = \Bigl( \pd g\abar \Bigr) f_0, \quad 
 \abar \x (gf_0) = \Bigl( 2\abar g - \pd ga \Bigr) f_0.
\label{eq:basic-steps} 
\end{equation}
Taking $g = 2^m \abar^m$, we find that
$$
\abar \x (2^m \abar^m f_0)
= \Bigl( 2^{m+1}\abar^{m+1} - 2^m \pd{\abar^m}{a} \Bigr) f_0
= 2^{m+1} \abar^{m+1} f_0,
$$
so we get by induction that $\abar^m \x f_0 = 2^m \abar^m f_0$ if 
$m \in \bN$. If $n > m$, this gives
$$
a^n \x \abar^m \x f_0 = a^n \x (2^m \abar^m f_0) = 
2^m \,\frac{\del^n}{\del\abar^n} (\abar^m) f_0 = 0,
$$
and if $n < m$, then
$$
f_0 \x a^n \x \abar^m = (a^m \x \abar^n \x f_0)^* = 0.
$$
Also,
\begin{align*}
f_0 \x a^n \x \abar^n \x f_0
&= f_0 \x a^n \x (2^n \abar^n f_0)
= f_0 \x 2^n\, \frac{\del^n}{\del\abar^n} (\abar^n) f_0
\\
&= 2^n n!\, f_0 \x f_0 = 2^n n!\, f_0.
\end{align*}
To summarize:
\begin{equation}
f_0 \x a^n \x \abar^m \x f_0 = \dl_{mn} 2^n n!\, f_0 
\word{for} m,n \in \bN.
\label{eq:vev} 
\end{equation}

We now introduce an orthonormal basis for $L^2(\bR^2)$, which we
declare as a doubly indexed family of functions in $\sS_2$, since this
family forms a system of matrix units with respect to the twisted
product.

\begin{defn} 
\label{df:matrix-basis}
For $m,n \in \bN$, we define $f_{mn} \in \sS_2$ by
\begin{equation}
f_{mn} := \frac{1}{\sqrt{2^{m+n}\,m!\,n!}}\, \abar^m \x f_0 \x a^n.
\label{eq:matrix-basis} 
\end{equation}
\end{defn}

{}From~\eqref{eq:vev} we get directly
\begin{align}
f_{mn} \x f_{kl}
&= (2^{m+n+k+l}\,m!\,n!\,k!\,l!)^{-1/2}
\abar^m \x f_0 \x a^n \x \abar^k \x f_0 \x a^l
\nonumber \\
&= \frac{\dl_{nk}}{\sqrt{2^{m+l}\,m!\,l!}}\, \abar^m \x f_0 \x a^l 
= \dl_{nk} f_{ml}.
\label{eq:matrix-units} 
\end{align}
This implies that
\begin{align*}
2\scal{f_{mn}}{f_{kl}}
&= \duo{f_{nm}}{f_{kl}} = \int (f_{nm} \x f_{kl})(u) \,du 
= \dl_{mk} \int f_{nl}(u) \,du
\\
&= \frac{\dl_{mk}}{\sqrt{2^{n+l}\,n!\,l!}} 
\int (\abar^n \x f_0 \x f_0 \x a^l)(u) \,du
\\
&= \frac{\dl_{mk}}{\sqrt{2^{n+l}\,n!\,l!}}
\int (f_0 \x a^l \x \abar^n \x f_0)(u) \,du
\\
&= \dl_{mk} \,\dl_{nl} \int f_0(u) \,du = 2 \,\dl_{mk} \,\dl_{nl}
\end{align*}
so that $\set{f_{mn} : m,n \in \bN}$ is orthonormal in $L^2(\bR^2)$.
This family is complete since -- see \eqref{eq:basis-change} below --
all the Hermite functions on~$\bR^2$ are linear combinations of
the~$f_{mn}$.

\begin{remk} 
The basis $(f_{mn})$ lies in $\sS_2$, as is clear
from~\eqref{eq:matrix-basis} since $a,\abar \in \sM$. It has also the
important property of diagonalizing the Fourier transforms; one
readily checks that
\begin{equation}
\sF(f_{mn}) = (-i)^{m+n} f_{mn}, \quad
F(f_{mn}) = (-1)^n f_{mn}, \quad
\Fbar(f_{mn}) = (-1)^m f_{mn},
\label{eq:Fourier-eigenbasis} 
\end{equation}
and hence also $\check f_{mn} = (-1)^{m+n} f_{mn}$.
\end{remk}

To present $f_{mn}$ explicitly, we use polar coordinates
$q + ip =: \rho e^{i\al}$; note that $\rho^2 = q^2 + p^2 = u^2$.

By induction, applying \eqref{eq:basic-steps}
to~\eqref{eq:matrix-basis}, we derive
$$
f_{mn} = \frac{1}{\sqrt{2^{m+n}\,m!\,n!}} \sum_{k=0}^n (-1)^k
\binom{m}{k} \binom{n}{k}\, k!\, 2^{m+n-k} \abar^{m-k} a^{n-k} f_0
$$
and, since $a = (1/\sqrt{2})\,\rho e^{i\al}$, 
$\abar = (1/\sqrt{2})\,\rho e^{-i\al}$, we get, after some
rearrangement,
\begin{equation}
f_{mn}(\rho,\al) 
= 2(-1)^n \sqrt{\frac{n!}{m!}}\, e^{-i\al(m-n)} \rho^{m-n}
L_n^{m-n}(\rho^2) \,e^{-\rho^2/2}
\label{eq:Laguerre-basis} 
\end{equation}
and in particular
\begin{equation}
f_{nn}(\rho,\al) = 2(-1)^n L_n(\rho^2) \,e^{-\rho^2/2}
\label{eq:diagonal-basis} 
\end{equation}
where $L_n$ and $L_n^{m-n}$ are the usual Laguerre polynomials of 
order~$n$.

\begin{remk} 
Equation~\eqref{eq:diagonal-basis} agrees with the Wigner function of
the $n$th energy level of the harmonic oscillator, obtained
in~\cite{Groenewold46}; and \eqref{eq:Laguerre-basis} agrees with the
``transition'' between levels of the harmonic oscillator, first
derived in~\cite{BartlettM49}; see also \cite{AmietH81}
and~\cite{Howe80}.
\end{remk}

Now we represent $\sS_2$, $\sS_2'$ and $L^2(\bR^2)$ as sequence spaces
of coefficients after expansion in the twisted Hermite basis
$(f_{mn})$. Our treatment is in the spirit of Simon's
work~\cite{Simon71} with the ordinary Hermite basis. Since here the
coefficients form a doubly indexed family, we may consider their
matrix product, which turns out to correspond to the twisted product
of the associated functions or distributions.

The fundamental fact which underlies the sequence constructions is
that the twisted Hermite basis ``diagonalizes'' the oscillator
Hamiltonian $H$ (its eigenvalues are odd integers rather than
half-integers due to our convention that $\hbar = 2$).

\begin{prop} 
\label{pr:energy-levels}
If $m,n \in \bN$, then
$$
H \x f_{mn} = (2m + 1) f_{mn}, \qquad f_{mn} \x H = (2n + 1) f_{mn}.
$$
\end{prop}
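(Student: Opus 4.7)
The plan is to derive ladder-operator formulas for the twisted action of $a$ and $\bar a$ on $f_{mn}$, then invoke the two presentations $H = a \x \bar a - 1 = \bar a \x a + 1$ from \eqref{eq:two-step}.

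First I would compute the two basic annihilation identities $a \x f_0 = 0$ and $f_0 \x \bar a = 0$. Both fall straight out of \eqref{eq:one-step} applied to $f_0 = 2e^{-\bar a a}$, since $\partial f_0/\partial \bar a = -a\,f_0$ and $\partial f_0/\partial a = -\bar a\,f_0$.

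Next I would establish the twisted commutation of powers. From $a \x \bar a - \bar a \x a = 2$ (equation \eqref{eq:two-step}) an easy induction gives
$$
a \x \bar a^m = \bar a^m \x a + 2m\,\bar a^{m-1}, \qquad
a^n \x \bar a = \bar a \x a^n + 2n\,a^{n-1}.
$$
Combined with $a \x f_0 = 0$ and $f_0 \x \bar a = 0$, these yield, after inserting the normalising constant from \eqref{eq:matrix-basis},
$$
\bar a \x f_{mn} = \sqrt{2(m+1)}\,f_{m+1,n}, \qquad
a \x f_{mn} = \sqrt{2m}\,f_{m-1,n},
$$
and symmetrically
$$
f_{mn} \x a = \sqrt{2(n+1)}\,f_{m,n+1}, \qquad
f_{mn} \x \bar a = \sqrt{2n}\,f_{m,n-1}.
$$
For instance, $a \x \bar a^m \x f_0 \x a^n = \bar a^m \x a \x f_0 \x a^n + 2m\,\bar a^{m-1} \x f_0 \x a^n = 2m\,\bar a^{m-1} \x f_0 \x a^n$, and dividing by $\sqrt{2^{m+n}m!n!}$ produces the second formula above.

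Finally I would conclude by associativity, using \eqref{eq:two-step} in the form that is natural for each side:
$$
H \x f_{mn} = (a \x \bar a) \x f_{mn} - f_{mn} = a \x (\bar a \x f_{mn}) - f_{mn} = 2(m+1)f_{mn} - f_{mn} = (2m+1)f_{mn},
$$
and
$$
f_{mn} \x H = f_{mn} \x (\bar a \x a) + f_{mn} = (f_{mn} \x \bar a) \x a + f_{mn} = 2n\,f_{mn} + f_{mn} = (2n+1)f_{mn}.
$$
There is no real obstacle: once the annihilation identities for $f_0$ and the polynomial commutation relations are in hand, everything is a direct manipulation inside $\sM$, where associativity and the bimodule action on $\sS_2$ are already available from Section~\ref{sec:Moyal-algebra}.
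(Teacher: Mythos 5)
Your proof is correct and follows the same route as the paper: derive the ladder-operator formulas for $a, \abar$ acting on $f_{mn}$ by $\x$ on either side, then apply \eqref{eq:two-step}. You have merely filled in the intermediate steps (annihilation of $f_0$, commutation of powers) that the paper compresses into ``from \eqref{eq:matrix-basis} we get at once.''
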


\begin{proof}
{}From \eqref{eq:matrix-basis} we get at once
\begin{alignat*}{2}
a \x f_{mn} &= \sqrt{2m}\, f_{m-1,n}, 
&\qquad f_{mn} \x a &= \sqrt{2n + 2}\, f_{m,n+1}
\\
\abar \x f_{mn} &= \sqrt{2m + 2}\, f_{m+1,n},
&\qquad f_{mn} \x \abar &= \sqrt{2n}\, f_{m,n-1},
\end{alignat*}
(with $f_{mn} = 0$ if $m$ or $n$ is~$-1$). The result follows
from~\eqref{eq:two-step}.
\end{proof}

Let us write $A(f) := H \x f \x H$. We could consider $A$ as an
operator on $L^2(\bR^2)$ with domain $\sS_2$; as such, $A$ is
symmetric and closable, and clearly unbounded. Indeed,
$(A \pm iI)f_{mn} = ((2m+1)(2n+1) \pm i)f_{mn}$ and hence $A \pm iI$
has dense range: thus $A$ is essentially self-adjoint. Moreover,
$A^{-1}$ has finite-dimensional eigenspaces and
$$
\sum_{m,n=0}^\infty (2m+1)^{-2} (2n+1)^{-2} = (\pi^2/8)^2
$$
is finite, so $A^{-1}$ is a Hilbert--Schmidt operator. It is not hard
to check that the seminorms $f \mapsto \|A^k f\|$, for $k \in \bN$,
generate the topology of~$\sS_2$.

\begin{remk} 
Let $B = u^2 - \Dl$ be the usual Hermite operator on $L^2(\bR^2)$. We 
have $Bf = H \x f + f \x H$, so $Bf_{mn} = 2(m+n+1)f_{mn}$. If
$h_k(x) := (2^{k-1}\,k!)^{-1/2} H_k(x) e^{-x^2/2}$ is the usual
Hermite function of degree~$k$, we conclude that
\begin{equation}
f_{mn} = \sum_{k+l=m+n} c_{mn}^{kl} h_k \ox h_l,  \qquad 
h_k \ox h_l = \sum_{m+n=k+l} b_{kl}^{mn} f_{mn},
\label{eq:basis-change} 
\end{equation}
for some constants $c_{mn}^{kl}$, $b_{kl}^{mn}$. In fact, we may
compute that
$$
c_{mn}^{kl} = 2^{(m-n)/2} i^{2m+l} {\binom{m+n}{l}}^{1/2}
{\binom{m+n}{m}}^{-1/2} P_m^{l-m,k-m}(0)
$$
where $P_m^{l-m,k-m}$ is the usual Jacobi polynomial. (This takes care
of the completeness argument for the~$f_{mn}$.)
\end{remk}

We can now characterize $\sS_2$ and~$\sS_2'$ as sequence spaces.

\begin{thm} 
\label{th:sequence-space}
Let $\sss$ be the Fréchet space of rapidly decreasing double
sequences $c = (c_{mn})$ such that
$$
r_k(c) := \biggl[
\sum_{m,n=0}^\infty (2m+1)^{2k} (2n+1)^{2k} |c_{mn}|^2 \biggr]^{1/2}
$$
is finite for all $k \in \bN$, topologized by the seminorms
$(r_k)_{k\in\bN}$. For $f \in \sS_2$, let $c$ be the sequence of
coefficients in the expansion $f = \sum_{m,n=0}^\infty c_{mn} f_{mn}$.
Then $f \mapsto c$ is an isomorphism of Fréchet spaces from $\sS_2$
onto~$\sss$.
\end{thm}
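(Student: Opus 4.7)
The plan is to leverage the remark immediately preceding the theorem, which asserts that the seminorms $f \mapsto \|A^kf\|$ ($k \in \bN$) generate the Fréchet topology of~$\sS_2$, together with the orthonormality and completeness of $\{f_{mn}\}$ in $L^2(\bR^2)$ with respect to $\scal{\cdot}{\cdot}$. Once this is in hand, the theorem reduces to the observation that $A$ is diagonalized by the basis, so the seminorms on both sides of the correspondence coincide under $f \leftrightarrow c$.

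First, I would verify that $\sss$ is itself a Fréchet space: the $r_k$ are increasing seminorms, and completeness follows from a routine argument in weighted $\ell^2$, noting that Cauchy in every $r_k$ produces limits in every weighted $\ell^2$ which are coefficient-wise consistent. Next, given $f \in \sS_2 \subset L^2(\bR^2)$, set $c_{mn} := \scal{f_{mn}}{f}$; orthonormality gives the $L^2$-expansion $f = \sum c_{mn}f_{mn}$. By Proposition~\ref{pr:energy-levels} and the associativity of $\x$, we have $Af_{mn} = (2m+1)(2n+1)f_{mn}$, so $A^k f = \sum c_{mn}(2m+1)^k(2n+1)^k f_{mn}$ converges in $L^2$, and Parseval yields
$$
\|A^kf\|^2 \;=\; \sum_{m,n=0}^\infty (2m+1)^{2k}(2n+1)^{2k}|c_{mn}|^2 \;=\; r_k(c)^2.
$$
In particular $r_k(c) < \infty$ for every~$k$, so $c \in \sss$, and $f \mapsto c$ is continuous with matching seminorms; injectivity is immediate from the $L^2$ orthonormality.

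For surjectivity, fix $c \in \sss$ and consider the partial sums $g_M := \sum_{m+n \leq M} c_{mn}f_{mn} \in \sS_2$. Since $A^k g_M = \sum_{m+n\leq M} c_{mn}(2m+1)^k(2n+1)^k f_{mn}$, orthonormality gives
$$
\|A^k(g_{M'} - g_M)\|^2 \;=\; \sum_{M < m+n \leq M'} (2m+1)^{2k}(2n+1)^{2k}|c_{mn}|^2,
$$
which tends to $0$ as $M,M' \to \infty$ because the full series converges to $r_k(c)^2$. Thus $(g_M)$ is Cauchy in every seminorm $\|A^k\cdot\|$, and since these generate the Fréchet topology of~$\sS_2$, the sequence converges to some $f \in \sS_2$. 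The continuous inclusion $\sS_2 \hookrightarrow L^2$ then gives $c_{mn} = \scal{f_{mn}}{f}$, so $f \mapsto c$ is bijective, and the identity $\|A^k f\| = r_k(c)$ already shows that both the map and its inverse are continuous with respect to the generating seminorms.

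The only genuinely non-routine point is the remark that the seminorms $\|A^k \cdot\|$ generate the Fréchet topology of $\sS_2$; this is where the self-adjointness of (the closure of) $A$ and the Hilbert--Schmidt property of $A^{-1}$ are used to compare these seminorms with the usual Schwartz seminorms. Taking this as granted, the proof is essentially the computation $\|A^kf\|^2 = r_k(c)^2$ plus Fréchet completeness.
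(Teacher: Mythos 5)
Your proof is correct and follows the same approach as the paper's: both rest on the identity $\|A^k f\| = r_k(c)$ (via the diagonalization $A f_{mn} = (2m+1)(2n+1) f_{mn}$ from Proposition~\ref{pr:energy-levels}), the preceding remark that the seminorms $\|A^k\cdot\|$ generate the topology of $\sS_2$, and a Cauchy-sequence argument on truncated partial sums for surjectivity. You spell out a few steps the paper compresses, such as the identification $c_{mn} = \scal{f_{mn}}{f}$ and the completeness of $\sss$, but the mechanism is identical.
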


\begin{proof}
If $f \in \sS_2$, then $\|A^k f\| < \infty$ for all $k \in \bN$, so 
that $r_k(c) = \|A^k f\|$ is finite for all~$k$. It follows that
$f \mapsto c$ is a one-to-one topological isomorphism of $\sS_2$
into~$\sss$.

Given any $c \in \sss$, for $M,N \in \bN$ let $c^{MN}$ be the double
sequence defined by $c_{mn}^{MN} := c_{mn}$ if $m \leq M$, $n \leq N$
and $c_{mn}^{MN} := 0$ otherwise. Then $r_k(c^{MN} - c) \to 0$ as
$M,N\to\infty$, for each $k$, so that the functions
$\sum_{m=0}^M \sum_{n=0}^N c_{mn} f_{mn}$ form a Cauchy sequence
in~$\sS_2$ and hence converge to a function $f$ which maps onto~$c$.
\end{proof}

For $T \in \sS_2'$, $m,n \in \bN$, define $b_{mn} := \duo{T}{f_{nm}}$.
Then
\begin{equation}
\duo{T,f} = \sum_{m,n=0}^\infty c_{nm} \duo{T}{f_{nm}}
= \sum_{m,n=0}^\infty c_{nm} b_{mn}
\label{eq:matrix-expansion} 
\end{equation}
where the series converges absolutely, for each
$f = \sum_{m,n=0}^\infty c_{mn}f_{mn} \in \sS_2$. Since
$T \in \sS_2'$, there exist $k \in \bN$, $K > 0$ such that
\begin{equation}
|\duo{T,f}| \leq K \|A^k f\| = K\,r_k(c)
\label{eq:sequence-estimate} 
\end{equation}
for all $f \in \sS_2$, and since
$$
\duo{T,f} = \sum_{m,n=0}^\infty (2m + 1)^{-k} (2n + 1)^{-k} b_{mn}
(2n + 1)^k (2m + 1)^k c_{nm}
$$
the Schwarz inequality gives $r_{-k}(b) \leq K$. Thus, whenever
$b_{mn}$ is a double sequence with $r_{-k}(b)$ finite for some~$k$,
the series $\sum_{m,n=0}^\infty b_{mn}f_{mn}$ converges weakly to~$T$
in $\sS_2'$, and \eqref{eq:sequence-estimate} shows that the
convergence is uniform on bounded subsets of~$\sS_2$, so the series
converges to $T$ in the strong dual topology of~$\sS_2'$.

The main result of this section is now easy.

\begin{thm} 
\label{th:matrix-basis}
If $a,b \in \sss$ correspond respectively to $f,g \in \sS_2$ as
coefficient sequences in the twisted Hermite basis, then the sequence
corresponding to the twisted product $f \x g$ is the matrix
product~$ab$, where
\begin{equation}
(ab)_{mn} := \sum_{k=0}^\infty a_{mk} b_{kn}.
\label{eq:matrix-product} 
\end{equation}
\end{thm}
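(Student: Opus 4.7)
The plan is to expand $f$ and $g$ in the twisted Hermite basis using Theorem \ref{th:sequence-space}, and then collapse the resulting double series by combining the matrix-unit identity \eqref{eq:matrix-units} with the joint continuity of $\x$ on $\sS_2$ (Proposition \ref{pr:Leibniz-rules}).

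First I would form the partial sums
$$
f^{(M)} := \sum_{m,n=0}^{M} a_{mn} f_{mn}, \qquad
g^{(M)} := \sum_{k,l=0}^{M} b_{kl} f_{kl}.
$$
By Theorem \ref{th:sequence-space}, $f^{(M)} \to f$ and $g^{(M)} \to g$ in $\sS_2$, hence joint continuity of $\x$ gives $f^{(M)} \x g^{(M)} \to f \x g$ in $\sS_2$. Because \eqref{eq:matrix-units} says $f_{mn} \x f_{kl} = \dl_{nk} f_{ml}$, the finite double sum reorganizes as
$$
f^{(M)} \x g^{(M)} = \sum_{m,l=0}^{M} c_{ml}^{(M)} f_{ml}, \qquad
c_{ml}^{(M)} := \sum_{k=0}^{M} a_{mk} b_{kl}.
$$

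Next, I would verify that $(ab)_{ml} := \sum_{k=0}^\infty a_{mk} b_{kl}$ converges absolutely for each fixed $m,l$. Since $a,b \in \sss$, the seminorm $r_1$ provides a constant $C_{ml}$ with $|a_{mk} b_{kl}| \leq C_{ml} (2k+1)^{-2}$, and $\sum_k (2k+1)^{-2} < \infty$. Thus $c_{ml}^{(M)} \to (ab)_{ml}$ as $M \to \infty$, and in particular the matrix product \eqref{eq:matrix-product} is well defined.

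Finally, I would extract coefficients. The inclusion $\sS_2 \hookrightarrow L^2(\bR^2)$ is continuous, and the $(f_{pq})$ are orthonormal in $L^2(\bR^2)$, so the functional $h \mapsto \scal{f_{pq}}{h}$ is continuous on $\sS_2$. Applied to $f^{(M)} \x g^{(M)} \to f \x g$ in $\sS_2$, and noting that for $M \geq \max(p,q)$ one has $\scal{f_{pq}}{f^{(M)} \x g^{(M)}} = c_{pq}^{(M)}$, this yields $\scal{f_{pq}}{f \x g} = (ab)_{pq}$, which is exactly the desired identification.

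The argument is essentially formal once the matrix-unit identity is in hand; the only point that must be honoured is that passing to the limit in the product of partial sums requires \emph{joint}, not merely separate, continuity of $\x$ on $\sS_2$, which is guaranteed by Proposition \ref{pr:Leibniz-rules}.
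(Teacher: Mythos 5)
Your proof is correct and follows essentially the same route as the paper's: expand $f$ and $g$ in the twisted Hermite basis, invoke the matrix-unit identity \eqref{eq:matrix-units} together with the continuity of $\x$ on $\sS_2$ from Proposition~\ref{pr:Leibniz-rules}, and pass to the limit. You simply make explicit the partial-sum and coefficient-extraction steps that the paper's terse computation leaves implicit.
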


\begin{proof}
{}From \eqref{eq:matrix-units} and the continuity of~$\x$ in $\sS_2$,
we get
\begin{align*}
f \x g
&= \biggl(\sum_{m,k} a_{mk} f_{mk} \biggr)
\x \biggl(\sum_{r,n} b_{rn} f_{rn} \biggr) 
\\  
&= \sum_{m,k,r,n} a_{mk} b_{rn}\, f_{mk} \x f_{rn} 
= \sum_{m,k,n} a_{mk} b_{kn}\, f_{mn}.
\tag*{\qed}
\end{align*}
\hideqed
\end{proof}

\begin{corl} 
\label{cr:Schwartz-square}
$\set{f \x g : f,g \in \sS_2} = \sS_2$.
\end{corl}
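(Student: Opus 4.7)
Given $h\in\sS_2$, the plan is to use Theorem~\ref{th:sequence-space} to pass to the coefficient matrix $c = (c_{mn})\in\sss$, and to reduce the claim via Theorem~\ref{th:matrix-basis} to a purely matrix statement: it suffices to exhibit $a,b\in\sss$ with $c = ab$ as a matrix product. (The reverse inclusion $\sS_2\x\sS_2\subseteq\sS_2$ is already contained in Proposition~\ref{pr:Leibniz-rules}.)

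To produce such a factorization, I would use the polar decomposition of $c$ viewed as a bounded operator on $\ell^2(\bN)$: write $c = V|c|$ with $|c| := (c^*c)^{1/2}$ and $V$ a partial isometry, and set $a := V|c|^{1/2}$, $b := |c|^{1/2}$, so that $ab = c$. Writing $D := \mathrm{diag}(2n+1)$, the proof then reduces to a chain of Hilbert--Schmidt and trace-class estimates for the weighted norms $\|D^N(\cdot)D^N\|_{HS}$. In order: (i)~$c^*c\in\sss$, because $D^Nc^*cD^N = (cD^N)^*(cD^N)$ is positive with trace norm $\|cD^N\|_{HS}^2<\infty$, and $\|\cdot\|_{HS}\leq\|\cdot\|_1$ for positive trace-class operators; (ii)~$|c|\in\sss$, because cyclicity of trace and Cauchy--Schwarz in the HS inner product give $\|D^N|c|D^N\|_{HS}^2 = \mathrm{tr}((|c|D^{2N})^2) \leq \||c|D^{2N}\|_{HS}^2 = \|cD^{2N}\|_{HS}^2$; (iii)~$\mathrm{tr}(D^M|c|)<\infty$ for every~$M$, via the identity $D^M|c|D^M = D^{-1}(D^{M+1}|c|D^{M+1})D^{-1}$, the bound $\|ABC\|_1 \leq \|A\|_{HS}\|B\|_{op}\|C\|_{HS}$, and $\|D^{-1}\|_{HS}^2 = \sum_n (2n+1)^{-2}<\infty$; (iv)~$b = |c|^{1/2}\in\sss$, by the same cyclicity trick as in~(ii), now bounded by $\mathrm{tr}(D^{4N}|c|)$; and iterating (i)--(iv) with $|c|^{1/2}$ in the role of~$c$, (v)~$|c|^{1/4}\in\sss$ (and symmetrically, $|c^*|^{1/4}\in\sss$).

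The main obstacle is the final step, showing $a = V|c|^{1/2}\in\sss$: the partial isometry~$V$ has no rapid-decrease property on its own, so it cannot simply be absorbed into a factor. My idea is to use the singular value decomposition $c = \sum_k \sigma_k \ketbra{u_k}{v_k}$, under which $V|c|^{1/2} = \sum_k\sigma_k^{1/2}\ketbra{u_k}{v_k}$, and to split the scalar as $\sigma_k^{1/2} = \sigma_k^{1/4}\cdot\sigma_k^{1/4}$ before applying Cauchy--Schwarz in the summation index~$k$. This should yield the pointwise bound
\[
|(V|c|^{1/2})_{mn}|^2 \leq (|c^*|^{1/2})_{mm}\,(|c|^{1/2})_{nn},
\]
from which $\|D^NaD^N\|_{HS}^2 \leq \mathrm{tr}(D^{2N}|c^*|^{1/2})\cdot\mathrm{tr}(D^{2N}|c|^{1/2})$; each of these two traces equals $\|D^N|c|^{1/4}\|_{HS}^2$ (respectively with $c^*$ in place of~$c$) by cyclicity of trace, and is finite by~(v). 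Hence $a\in\sss$, and Theorem~\ref{th:matrix-basis} produces the desired $f,g\in\sS_2$ with $f\x g = h$.
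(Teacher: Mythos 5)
Your proof is correct, but it takes a genuinely different route from the paper's. The paper invokes Howe's elementary argument: from $c\in\sss$, form the diagonal sequence $d$ with $d_m := \bigl(\sup\set{|c_{jr}| : j\in\bN,\ r\geq m}\bigr)^{1/2}$ (so $d\in\sss$, with $r_k(d)^2 \leq C_k\,r_{2k+2}(c)$), then set $b_{mn} := c_{mn}/d_n$; one checks $|b_{mn}|^2\leq |c_{mn}|$, which gives $b\in\sss$ by the same kind of estimate, and $c = bd$ as a matrix product. This is purely combinatorial, living entirely at the level of double sequences and the seminorms $r_k$. You instead realize $c$ as a compact operator on $\ell^2(\bN)$ and produce the canonical factorization $c = (V|c|^{1/2})\cdot|c|^{1/2}$ via polar decomposition, controlling each factor by a chain of Hilbert--Schmidt and trace-class estimates against the weight $D=\mathrm{diag}(2n+1)$. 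Both deliver the same conclusion. The paper's method is shorter and uses no operator theory; yours exposes the link between $\sss$ and the Schatten ideals and hands you a symmetric factorization $c = ab$ with $b=|c|^{1/2}\geq 0$ and $|a|=|c^*|^{1/2}$, genuine extra structure the Howe factorization does not provide. Two small remarks: in step~(iii), the identity as written controls $\mathrm{tr}(D^{2M}|c|)$ rather than $\mathrm{tr}(D^{M}|c|)$, harmless since $M$ is arbitrary; and once $|c|^{1/2}\in\sss$ is established in~(iv), applying~(iii) directly to $|c|^{1/2}$ already yields $\mathrm{tr}(D^{2N}|c|^{1/2})<\infty$, so the detour through $|c|^{1/4}$ in~(v) can be shortcut.
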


\begin{proof}
It suffices to show that any $c \in \sss$ can be written as the matrix
product of two sequences in~$\sss$. We use Howe's
argument~\cite{Howe80} to show this.

Set 
$d_m := \bigl( \sup\set{|c_{jr}| : j \in \bN,\ r \geq m} \bigr)^{1/2}$
for $m \in \bN$, and let $d$ be the ``diagonal'' sequence with entries
$d_m\dl_{mn}$. Then one verifies that $r_k(d)^2 \leq C_k r_{2k+2}(c)$ 
for some constants $C_k$, so that $d \in \sss$. Now if we set
$b_{mn} := c_{mn}/d_n$, we get 
$|b_{mn}| = |c_{mn}|/d_n \leq d_n^2/d_n = d_n$ and thus $b \in \sss$
also. Clearly $bd = c$.
\end{proof}

\begin{remk} 
The sequence $a \4 b$ corresponding to $f \4 g$ is
$$
(a \4 b)_{mn} := \sum_{k=0}^\infty (-1)^k a_{mk} b_{kn}.
$$
Since, by \eqref{eq:Fourier-switch} and~\eqref{eq:Fourier-cross},
$f \4 g = f \x \dl \x g$, it suffices to show that $\dl$ is
represented by the diagonal matrix with entries $(-1)^m \dl_{mn}$;
this follows from~\eqref{eq:Fourier-eigenbasis}, since
$\sF\one = \dl$. Thus the entire theory of the twisted product and
convolution could be developed, at least formally, in the matrix
language and without mention of the symplectic Fourier transforms;
the basic transformation formulas -- see~\eqref{eq:Fourier-switch} --
are:
$$
R \4 S = R \x \dl \x S, \qquad R \x S = R \4 \one \4 S.
$$
\end{remk}

We now show that \eqref{eq:matrix-product} gives a second way of
defining the twisted product for many pairs of distributions, which
lie in spaces of Sobelev type.

\begin{defn} 
\label{df:double-scale}
For $s,t \in \bR$, we denote by $\sG_{s,t}$ the Hilbert space obtained
by completing $\sS_2$ with respect to the norm
\begin{equation}
\|f\|_{s,t} := \biggl( \sum_{m,n=0}^\infty (2m + 1)^s (2n + 1)^t
|c_{mn}|^2 \biggr)^{1/2}.
\label{eq:st-norms} 
\end{equation}
\end{defn}

Observe that $\sG_{0,0} = L^2(\bR^2)$ with $\|f\|_{0,0} = \|f\|$. An
orthonormal basis for $\sG_{s,t}$ is given by the functions
$(2m + 1)^{-s/2} (2n + 1)^{-t/2} f_{mn}$, and thus
$$
f = \sum_{m,n=0}^\infty c_{mn} f_{mn}
$$
with convergence in the $(s,t)$-norm, for all $f \in \sG_{s,t}$. Note
that $\sS_2 = \bigcap_{s,t\in\bR} \sG_{s,t}$ topologically. Since
$\sS_2 \subset \sG_{s,t}$ is a continuous inclusion with dense image,
$\sG_{s,t}$ is a normal space of tempered distributions, and the
transpose of $\sS_2 \subset \sG_{s,t}$ is the inclusion
$\sG_{-t,-s} \subset \sS_2'$. Also, from~\eqref{eq:sequence-estimate},
$\sS_2' = \bigcup_{s,t\in\bR} \sG_{s,t}$ (topologically). Furthermore,
$\sG_{s,t} \subset \sG_{q,r}$ with a continuous inclusion iff
$s \geq q$ and~$t \geq r$. Note also that $f^* \in \sG_{t,s}$ whenever
$f \in \sG_{s,t}$.

If $g = \sum_{m,n=0}^\infty b_{mn} f_{mn} \in \sG_{q,r}$, we define
formally
\begin{equation}
f \x g := \sum_{m,n=0}^\infty \biggl(
\sum_{k=0}^\infty c_{mk} b_{kn} \biggr) f_{mn}.
\label{eq:matricial-product} 
\end{equation}

\begin{thm} 
\label{th:matricial-product}
\begin{enumerate}
\item[(1)]
The serie \eqref{eq:matricial-product} converges in $\sG_{s,r}$ if
$t + q \geq 0$, and in that case
$$
\|f \x g\|_{s,r} \leq \|f\|_{s,t}\, \|g\|_{q,r}\,.
$$
\item[(2)]
$\sG_{s,t}$ is a Banach algebra under the twisted product 
\eqref{eq:matricial-product} whenever $s + t \geq 0$; for $s \geq 0$,
$\sG_{s,s}$~is a Banach $*$-algebra.
\item[(3)]
The Fourier transforms $F,\Fbar,\sF$ are unitary isometries of each 
$\sG_{s,t}$ onto itself.
\item[(4)]
The twisted product \eqref{eq:matricial-product} is consistent with
previous definitions.
\end{enumerate}
\end{thm}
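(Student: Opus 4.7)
The plan is to establish (1) first via a weighted Cauchy--Schwarz bound on each matrix entry, then obtain (2) by specializing, (3) by inspection of the Fourier eigenvalues on the basis, and (4) by a density argument. The heart of the argument is Part~(1). For fixed $m,n$ I would split the weight on the summation index $k$ as $1 = (2k+1)^{t/2}(2k+1)^{-t/2}$ and apply Cauchy--Schwarz:
\[
\Bigl|\sum_{k=0}^{\infty} c_{mk}\,b_{kn}\Bigr|^{2}
\leq \Bigl(\sum_{k} (2k+1)^{t}\,|c_{mk}|^{2}\Bigr)
     \Bigl(\sum_{k} (2k+1)^{-t}\,|b_{kn}|^{2}\Bigr).
\]
The hypothesis $t+q\geq 0$ enters here: since $2k+1\geq 1$, it forces $(2k+1)^{-t}\leq (2k+1)^{q}$, so the second factor is dominated by $\sum_{k}(2k+1)^{q}\,|b_{kn}|^{2}$. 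Multiplying by $(2m+1)^{s}(2n+1)^{r}$ and summing over $m,n$ decouples everything into a product, yielding $\|f\x g\|_{s,r}^{2}\leq \|f\|_{s,t}^{2}\,\|g\|_{q,r}^{2}$. The same estimate applied to tails shows that the partial sums of~\eqref{eq:matricial-product} form a Cauchy sequence in~$\sG_{s,r}$, giving both convergence and the norm bound simultaneously.

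Parts~(2) and~(3) should then fall out directly. For (2), I would specialize (1) to $q=s$, $r=t$ with $s+t\geq 0$ to obtain the Banach-algebra estimate $\|f\x g\|_{s,t}\leq \|f\|_{s,t}\,\|g\|_{s,t}$. For the $*$-algebra claim when $s=t\geq 0$, I would read off from~\eqref{eq:matrix-basis} (using $a^{*}=\abar$ and $f_{0}^{*}=f_{0}$) the identity $f_{mn}^{*}=f_{nm}$, so that the coefficients of $f^{*}$ are $\overline{c_{nm}}$; the symmetry in $(m,n)$ of the $(s,s)$-weight then makes $\|f^{*}\|_{s,s}=\|f\|_{s,s}$. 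Part~(3) follows at once from the eigenvalue formulas~\eqref{eq:Fourier-eigenbasis}: each of $F$, $\Fbar$, $\sF$ acts diagonally on the $f_{mn}$ by scalars of modulus one, so $\|\cdot\|_{s,t}$ is preserved term by term.

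The most delicate step, and the one I expect to be the main obstacle, is Part~(4). I need to verify that the matricial product coincides with each of the earlier duality-based definitions wherever both make sense. The $(\sS_2,\sS_2)$ case is Theorem~\ref{th:matrix-basis}, so the substantive overlap is $f\in\sS_2$, $g\in\sG_{q,r}$. My plan is to choose $g_n\in\sS_2$ with $g_n\to g$ in $\|\cdot\|_{q,r}$; via the continuous inclusion $\sG_{q,r}\subset\sS_2'$ this also gives $g_n\to g$ in~$\sS_2'$. By Part~(1), the matricial products $f\x g_n$ converge to the matricial $f\x g$ in~$\sG_{s,r}$, hence in~$\sS_2'$; by separate continuity of Definition~\ref{df:transposed-product}, the duality products $f\x g_n$ converge to the duality $f\x g$ in~$\sS_2'$. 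Since the two definitions agree on each $g_n\in\sS_2$ by Theorem~\ref{th:matrix-basis}, their limits must coincide. The symmetric case and the fully bi-distributional case would follow by iterating this approximation. The bookkeeping is delicate because the matricial product lives naturally in a specific $\sG_{s,r}$ while the duality products live in~$\sS_2'$, and the comparison only succeeds through the chain of continuous embeddings established earlier in the section.
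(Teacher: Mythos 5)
Your Parts (1)--(3) are essentially the same as the paper's argument: the Cauchy--Schwarz estimate with the $(2k+1)^{\pm t/2}$ weight split is exactly the paper's computation (the paper arrives at $\|f\|_{s,t}\,\|g\|_{-t,r}$ and then uses $\|g\|_{-t,r}\leq\|g\|_{q,r}$, whereas you apply $(2k+1)^{-t}\leq(2k+1)^q$ termwise; same thing), the specialization $q=s$, $r=t$ gives (2), and the Fourier eigenvalues give (3). Your explicit check $f_{mn}^*=f_{nm}$ for the $*$-isometry of $\sG_{s,s}$ matches the paper's earlier remark that $f^*\in\sG_{t,s}$ when $f\in\sG_{s,t}$.

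Part (4) is where you take a genuinely different route. The paper's proof is a direct coefficient computation: it writes out $\duo{T}{f\x g}=\sum_{m,n}d_{mn}\bigl(\sum_k c_{mk}b_{kn}\bigr)$, then interchanges the summations (justifying this by absolute convergence, using \eqref{eq:matrix-expansion} for the double sums and the fact that $T\x f\in\sG_{s,t}$ with $g\in\sG_{-t,-s}$ for the inner sums) to obtain $\duo{T\x f}{g}$, thereby identifying the duality product and the matricial product pairing-by-pairing; a second short computation does the same for the $\sM$-bimodule product of Definition~\ref{df:Moyal-bimodule}. Your argument instead runs through density and continuity: approximate $g\in\sG_{q,r}$ by $g_n\in\sS_2$, invoke Part~(1) for the matricial side, the hypocontinuity following Definition~\ref{df:transposed-product} for the duality side, Theorem~\ref{th:matrix-basis} for agreement on $\sS_2\x\sS_2$, and uniqueness of limits in $\sS_2'$. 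Both are correct. The paper's computation is more self-contained and handles the bimodule case uniformly with one extra line; your density argument is structurally cleaner and avoids justifying the Fubini-type interchange explicitly, at the cost of leaning on the earlier topological bookkeeping (continuity of $\sG_{q,r}\hookrightarrow\sS_2'$, density of $\sS_2$ in $\sG_{q,r}$). Your final sentence about the ``fully bi-distributional case'' is a bit loose --- there is no universal duality definition for two arbitrary distributions, so the only thing to check there is consistency with the $\sM$-bimodule product, which your scheme does cover but would need to be spelled out with $S\in\sM_L$ approximated in the relevant $\sG$-norm.
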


\begin{proof}
(1) From the Schwarz inequality we get
\begin{align*}
\|f \x g\|_{s,r}^2 
&\leq \sum_{m,n=0}^\infty (2m + 1)^s \biggl(
\sum_{k=0}^\infty |c_{mk} b_{kn}| \biggr)^2 (2n + 1)^r
\\
&= \sum_{m,n=0}^\infty (2m + 1)^s \biggl( \sum_{k=0}^\infty |c_{mk}|
(2k + 1)^{t/2} (2k + 1)^{-t/2} |b_{kn}| \biggr)^2 (2n + 1)^r
\\
&\leq \sum_{m,k=0}^\infty (2m + 1)^s |c_{mk}|^2 (2k + 1)^t 
\sum_{l,n=0}^\infty (2l + 1)^{-t} |b_{ln}|^2 (2n + 1)^r
\\
&= \|f\|_{s,t}^2 \,\|g\|_{-t,r}^2 \leq \|f\|_{s,t}^2 \,\|g\|_{q,r}^2
\end{align*}
whenever $q \geq -t$, which yields convergence
of~\eqref{eq:matricial-product} in this case.

(2) follows by taking $q = s$, $r = t$.

(3) The Fourier invariance of $\sG_{s,t}$ is evident from
\eqref{eq:st-norms} and~\eqref{eq:Fourier-eigenbasis}.

(4) If $T = \sum_{m,n=0}^\infty d_{mn} f_{mn} \in \sS_2'$, then
\begin{align}
\duo{T}{f \x g}
&= \sum_{m,n=0}^\infty d_{mn} \biggl(
\sum_{k=0}^\infty c_{mk} b_{kn} \biggr)
\nonumber \\
&= \sum_{k,n=0}^\infty \biggl( 
\sum_{m=0}^\infty d_{nm} c_{mk} \biggr) b_{kn} = \duo{T \x f}{g},
\label{eq:matricial-transfer} 
\end{align}
where the convergence of the double sums is absolute
by~\eqref{eq:matrix-expansion} and that of the simple sums is also
absolute, in the second case because $T \x f$ lies in some $\sG_{s,t}$
and $g$ lies in $\sS_2 \subset \sG_{-t,-s}$; thus we may interchange
the summations, obtaining
$$
T \x f = \sum_{k,n=0}^\infty \bigl(
\sum_{m=0}^\infty d_{nm} c_{mk} \bigr) f_{nk}.
$$

If $f \in \sM_L$, $g \in \sS_2$, then by
Definition~\ref{df:Moyal-bimodule},
$$
\duo{T \x f}{g} := \duo{T}{f \x g} 
= \sum_{m,n=0}^\infty d_{nm} \biggl(
\sum_{k=0}^\infty c_{mk} b_{kn} \biggr);
$$
the double sum converges absolutely since $f \x g \in \sS_2$, so we
may interchange the order of summation to
recover~\eqref{eq:matricial-transfer}.
\end{proof}

\begin{remk} 
We see that if $f,g \in L^2(\bR^2)$, then $f \x g \in L^2(\bR^2)$ and
$\|f \x g\| \leq \|f\|\,\|g\|$. Moreover, $f \x g$ lies in
$C_0(\bR^2)$: the continuity follows by adapting the analogous
argument for (ordinary) convolution.
\end{remk}

\begin{remk} 
Notice from \eqref{eq:st-norms} that $A(\sG_{s,t}) = \sG_{s-2,t-2}$,
where we write $A(T) := H \x T \x H$ for any $T \in \sS_2'$, which
makes sense since $H \in \sM$. Clearly $A(\sM) \subset \sM$, so that
if $\sM$ were to contain any $\sG_{s,t}$, it would contain them all.
However, $\sM \neq \sS_2'$, and thus $\sM$ contains no $\sG_{s,t}$; in
particular, $L^2(\bR^{2N}) \nsubseteq \sM$. Hence $\sM$ really
provides a \textit{different} extension of~$\x$ on $\sS_2$ from the
Banach algebras $\sG_{s,t}$ $(s + t \geq 0)$.
\end{remk}

\section{Conclusion and outlook} 
\label{sec:outlook}

We have been led to define the twisted product of a pair of
distributions and to introduce the Moyal $*$-algebra $\sM$ under the
twisted product. A rigorous formulation of the phase-space approach to
quantum theory, in the arena given by $\sM$, should address the
following problems:
\begin{enumerate}
\item[(a)]
find an equivalent of the spectral theorem;
\item[(b)]
solve the dynamical equations using such a spectral theorem;
\item[(c)]
describe the algebraic structure of the state space corresponding
to~$\sM$.
\end{enumerate}

We plan to take up these problems. Our first task is to give $\sM$ an
appropriate topology: this we do in the following paper~\cite{Deimos}.

\subsection*{Acknowledgements}

We are grateful for helpful correspondence from John Horváth and Peter
Wagner. We would like to thank Prof.~Abdus Salam and the International
Centre for Theoretical Physics, for their hospitality during a stay in
which this work was completed. We gratefully acknowledge support from
the Vicerrectoría de Investigación of the Universidad de Costa Rica.

\end{document}